\documentclass[a4paper,UKenglish,cleveref,autoref, thm-restate]{lipics-v2021}


\usepackage{tikz}
\usepackage{todonotes}

\bibliographystyle{plainurl}

\title{Explorability in Pushdown Automata} 


\author{Ayaan Bedi}{Chennai Mathematical Institute, India \and \url{https://www.cmi.ac.in/~ayaanb.mcs2024/} }{ayaanb.mcs2024@cmi.ac.in}{https://orcid.org/0009-0006-1787-7476}{}

\author{Karoliina Lehtinen}{CNRS Researcher, Aix-Marseille Université, LIS \and \url{https://lehtinenkaroliina.wordpress.com/}}{karoliina.lehtinen@lis-lab.fr}{https://orcid.org/0000-0003-1171-8790}{}

\authorrunning{A. Bedi and K. Lehtinen} 

\Copyright{Ayaan Bedi and Karoliina Lehtinen} 


\ccsdesc[500]{Theory of computation~Automata extensions}

\keywords{Pushdown automata, nondeterminism, explorability, history-determinism} 

\category{} 

\relatedversion{} 


\funding{This work is in part funded by ANR Project QuaSy ANR-23-CE48-0008 and part supported by ReLaX,
Research Lab in Computer Science, CNRS IRL 2000.}

\acknowledgements{We are grateful to Aditya Prakash for his insightful input and helpful discussions.}


\hideLIPIcs
\nolinenumbers

\EventEditors{John Q. Open and Joan R. Access}
\EventNoEds{2}
\EventLongTitle{42nd Conference on Very Important Topics (CVIT 2016)}
\EventShortTitle{CVIT 2016}
\EventAcronym{CVIT}
\EventYear{2016}
\EventDate{December 24--27, 2016}
\EventLocation{Little Whinging, United Kingdom}
\EventLogo{}
\SeriesVolume{42}
\ArticleNo{23}

\begin{document}

\maketitle

\begin{abstract}
We study \emph{explorability}, a measure of nondeterminism in pushdown automata, which generalises history-determinism. An automaton is $k$-explorable if, while reading the input, it suffices to follow $k$ concurrent runs, built step-by-step based only on the input seen so far, to construct an accepting one, if it exists. We show that the class of explorable PDAs lies strictly between history-deterministic and fully nondeterministic PDAs in terms of both expressiveness and succinctness. In fact increasing explorability induces an infinite hierarchy: each level $k$ defines a strictly more expressive class than level $k-1$, yet the entire class remains less expressive than general nondeterministic PDAs. We then introduce a parameterized notion of explorability, where the number of runs may depend on input length, and show that exponential explorability precisely captures the context-free languages. Finally, we prove that explorable PDAs can be doubly exponentially more succinct than history-deterministic ones, and that the succinctness gap between deterministic and 2-explorable PDAs is not recursively enumerable. These results position explorability as a robust and operationally meaningful measure of nondeterminism for pushdown systems.

\end{abstract}

\section{Introduction }

Determinism and nondeterminism are central themes in theoretical computer science, particularly in the study of computational models in automata theory and complexity theory. Nondeterminism often brings either increased expressive power or more succinct representations. For example, in the case of finite automata, while nondeterminism does not increase expressive power, it allows for exponentially more succinct representations. In contrast, for Turing machines, while expressiveness remains unchanged, the distinction between deterministic and nondeterministic variants leads to the foundational open problem of the field: \emph{P vs.\ NP}. Pushdown automata (PDAs), which are strictly more powerful than finite automata, exhibit both increased expressiveness and non recursively ennumerable succinctness when nondeterminism is introduced. 

While deterministic models are often seen as overly restrictive, unrestricted nondeterminism functions as a computational “superpower,” capable of always making the correct choice. This raises the question of whether is there a meaningful spectrum between determinism and full nondeterminism and how to usefully quantify the degree of nondeterminism.

A variety of measures of nondeterminism have been suggested for pushdown automata. Goldstine et al. quantified nondeterminism by counting the minimum number of guesses (in bits) needed to accept a word, which induces an infinite hierarchy of complexity classes and language subclasses \cite{Goldstine2005-gr}, while Han et al. measure computation tree width and ambiguity\cite{Han2019-mc}. 
Another approach is to restrict nondeterminism to bounded or context-dependent forms \cite{Bednarova2014-ee}. For example, Herzog characterizes pushdown automata with at most $k$ nondeterministic choices as unions of $k$ deterministic context-free languages, noting tradeoffs in descriptive complexity that are sometimes not recursively bounded~\cite{Herzog1997-fs}. Models with regulated nondeterminism (using control sets or stack-content constraints) describe language classes that lie strictly between the deterministic and fully nondeterministic cases.

One well-studied intermediate form of nondeterminism is \emph{bounded ambiguity}, which restricts the number of accepting runs per input—for instance, to at most one (unambiguity), or at most \(k\), for some fixed constant \(k\). In parallel, another notable form of restricted nondeterminism that simplifies decision problems like universality and game solving is \emph{history-determinism} (HD), also known as \emph{good-for-games} (GFG) nondeterminism~\cite{Henzinger2006-tv}. The key idea behind HD automata is that their nondeterministic choices can be resolved in a forward, step-by-step manner, based only on the input processed so far, without any need to anticipate future input symbols \cite{Guha2021-sv}.

Recently, history-deterministic pushdown automata (HD-PDAs)  have been shown to be more expressive than their deterministic counterparts \cite{Guha2021-sv}. Moreover, HD-PDAs were shown to enjoy at least exponential succinctness over deterministic PDAs. However, when compared to general nondeterministic PDAs, HD-PDAs are strictly less expressive and also less succinct, by at least doubly exponential factor \cite{Guha2021-sv}. Furthermore, at least one of these succinctness gaps is far from tight, since the succinctness gap between nondeterministic and deterministic PDAs is not recursively enumerable (RE) \cite{Hartmanis1980-wi}---the question of which one, or both, was left open.



This significant gap between HD-PDAs and general nondeterministic PDAs motivates the exploration of intermediate classes of languages. By identifying and characterizing such classes, we aim to better understand the trade-offs between expressiveness, succinctness, and the degree of nondeterminism permitted in pushdown computations. Therefore we study the  notion of \emph{explorability}, which was introduced for finite state automata by Hazard and Kuperberg~\cite{E-Hazard} as a generalisation of history-determinism. Intuitively, $k$-explorability corresponds to having $k$ separate runs that are being resolved on-the-fly, with the requirement that for any word in the language, at least one of the runs should be accepting. 

The problem of recognizing explorable NFAs is \textsf{ExpTime}-complete for automata over finite words and for Büchi automata. In the setting of infinite words, recognizing \(\omega\)-explorable automata is also \textsf{ExpTime}-complete for safety and co-Büchi acceptance conditions \cite{E-Hazard}. Moreover Hazard, Idir and Kuperberg showed that there is a trade-off between the explorability of an automaton and the number of priorities needed to recognise all $\omega$-regular languages \cite{Hazard2024-sg}.

\medskip
 
\textbf{Our contributions: }In this work, we extend the study of PDAs by investigating their expressiveness and succinctness across increasing levels of explorability. We show that \emph{explorable} PDAs are strictly more expressive than HD-PDAs. Furthermore, we establish an infinite hierarchy within the class of explorable PDAs:  \(k\)-explorable PDAs are strictly less expressive than \((k+1)\)-explorable ones. We also prove that the class of explorable PDAs is strictly less expressive than the class of general nondeterministic PDAs.

We then ask under what conditions does the class of explorable PDAs match the expressive power of the class of fully nondeterministic PDAs? We answer this by introducing a \emph{parameterized} notion of explorability, where the number of concurrent runs may depend on the length of the input word. In this setting, we demonstrate a hierarchy between constant-bounded, linearly-bounded, and exponentially-bounded explorability. Exponentially explorable PDAs capture exactly the class of context-free languages, and are therefore equivalent in expressiveness to general nondeterministic PDAs. While we only study this notion over PDAs, the notion of parameterised explorability might be of independent interest, as is the case for parameterised ambiguity.

We then turn our attention to the succinctness of explorable PDAs. We prove that the succinctness gap between HD-PDAs and general explorable PDAs is at least doubly exponential. Moreover, we establish that a similar doubly exponential gap persists even when explorability is restricted to a constant. Delving further into the non-RE separation between deterministic and nondeterministic models, we show that the succinctness gap between DPDAs and 2-explorable PDAs is not recursively enumerable. That is, there exists no recursive bound on the size blow-up required to simulate certain 2-explorable PDAs by DPDAs.




\section{Preliminaries} \label{sec:Preliminaries}

\subsection*{Pushdown Automaton}

\begin{definition}
    A \emph{pushdown automaton} (PDA) is a computational model that extends finite automata with a stack, providing additional memory. Formally, a PDA is a tuple $M = (Q, \Sigma, \Gamma, \delta, q_0, Z_0, F)$ where:
\begin{itemize}
    \item $Q$ is a finite set of states,
    \item $\Sigma$ a finite input alphabet,
    \item $\Gamma$ a finite stack alphabet,
    \item $\delta: Q \times (\Sigma \cup \{\varepsilon\}) \times \Gamma \to 2^{Q \times \Gamma^{\leq 2}}$ is the transition function,
    \item $q_0 \in Q$ is the initial state,
    \item $Z_0 \in \Gamma$ is the initial stack symbol, and
    \item $F \subseteq Q$ is the set of accepting states.
\end{itemize}
\end{definition}

At any point during computation, the \emph{mode}  of a PDA is of the form $ Q \times \Gamma$ and is determined by its current state and the symbol at the top of the stack. The behavior of the PDA is defined by the transition function $\delta$, which may allow multiple transitions from the same mode. The transitions enabled at a mode, runs, accepting runs, and language of an automaton are defined as usual \cite{Hopcroft2001}. 
%
%
The \emph{size} of a PDA is the product of the number of states $|Q|$ and the number of stack symbols $|\Gamma|$. 

PDAs accept exactly the class of languages known as \emph{context-free languages} (CFLs). A PDA is said to be \emph{deterministic} if for all $q \in Q$, $a \in \Sigma \cup \{\varepsilon\}$, and $X \in \Gamma$, the set $\delta(q, a, X)$ contains at most one element, and if $\delta(q, \varepsilon, X) \neq \emptyset$, then $\delta(q, a, X) = \emptyset$ for all $a \in \Sigma$. The class of languages accepted by deterministic PDAs (DPDAs) is called the class of \emph{deterministic context-free languages} (DCFLs), and it is known that $\text{DCFL} \subsetneq \text{CFL},$ i.e., DCFLs form a strict subset of CFLs~\cite{Hopcroft2001}. 

\section{Explorability of Pushdown Automata}

\subsection*{The \textit{k}-Explorability Game for Pushdown Automata}
As described by Hazard et al. \cite{E-Hazard}, \emph{explorability} restricts or measures the amount of nondeterminism an automaton needs to accept a language, while allowing more flexibility than history determinism (HD). For a given \( k \in \mathbb{N} \), an automaton is said to be \emph{\( k \)-explorable} if, when processing an input, it is sufficient to keep track of most \( k \) runs in order to construct an accepting run, if one exists. This generalizes the idea of HD automata, which corresponds to the special case where \( k = 1 \).

We now introduce the \emph{$k$-explorability game} \cite{E-Hazard} over a PDA, which generalises the game-based definition of HD-automata, and captures a hierarchy of nondeterministic behavior in pushdown automata (PDAs). This game is played between two players—\emph{Spoiler} and \emph{Determiner}—on a nondeterministic PDA $A$.
\begin{definition}[\textit{k}-Explorability Game for PDAs]
Let $P = (Q, \Sigma, \Gamma, \delta, q_0, Z_0, F)$ be a pushdown automaton, and let $k \in \mathbb{N}$ be a fixed integer. The \emph{$k$-explorability game} on $P$ is played on the arena $Q^k \times (\Gamma^*)^k$, where each of $k$ tokens maintains its own copy of the control state and stack content. 
We can imagine a configuration here as $k$ tokens say {1... k} placed in the space $Q \times \Gamma^*$ and arena is the set of all possible configurations. 

The game proceeds as follows:
\begin{itemize}
    \item \textbf{Initialization:} The initial configuration is a $k$-tuple of identical PDA configurations: \[S_0 = ((q_0, Z_0), \ldots, (q_0, Z_0))\]
    
    \item \textbf{Gameplay:}  At each step $i \geq 1$, from configuration $S_{i-1}$:
    \begin{enumerate}
        \item \textbf{Spoiler} chooses an input symbol $a_i \in \Sigma$.

        \item \textbf{Determiner} responds by selecting the next configuration 
        \[
        S_i = ((q_i^1, \gamma_i^1), \ldots, (q_i^k, \gamma_i^k)) \in (Q \times \Gamma^*)^k
        \]
        such that for each $l \in \{1, \ldots, k\}$, there exists a path of transitions in $P$ from $(q_{i-1}^l, \gamma_{i-1}^l)$ to $(q_i^l, \gamma_i^l)$ of the form:
        \[
        (q_{i-1}^l, \gamma_{i-1}^l) \xrightarrow{\varepsilon} \cdots \xrightarrow{\varepsilon} \xrightarrow{a_i} (q_i^l, \gamma_i^l)
        \]
        where each step is valid under the PDA transition relation $\delta$.
    \end{enumerate}
    \item The play continues for an infinite number of steps and a word $w = a_1 a_2 \ldots$ is formed.
\end{itemize}
A \emph{play} is an infinite sequence $\pi = S_0 \, t_1 \, S_1 \, t_2 \, S_2 \, \ldots $
such that:  $  S_0 = ((q_0, Z_0), \ldots, (q_0, Z_0)) \in (Q \times \Gamma^*)^k,$ is a $k$-tuple of  initial state $q_0$ and initial stack configuration $Z_0$. And for all $i \geq 1$, and for each $l \in \{1, \ldots, k\}$, the transition  $(q_{i-1}^l, \gamma_{i-1}^l) \xrightarrow{t^l_i} (q_i^l, \gamma_i^l)$ is valid according to the PDA's transition relation $\delta$.

The play is \emph{won by Determiner} if for every finite prefix of \(w=a_1a_2\dots \) that is in the language of $L$, i.e. $w' = a_1 \ldots a_n \in L(P)$, there exists at least one token $l \in \{1, \ldots, k\}$ such that the sequence of configurations for token $l$ forms an accepting run of $P$ on $w'$. Otherwise, \emph{Spoiler} wins the game.

\end{definition}

We say that a PDA $P$ is \emph{$k$-explorable} ($\mathsf{Expl}_k\text{-}\mathsf{PDA}$) if Determiner has a winning strategy in the $k$-explorability game on $P$. If $P$ is $k$-explorable for some $k \in \mathbb{N}$, we say that $P$ is \emph{explorable} ($\mathsf{Expl}\text{-}\mathsf{PDA}$). When $k=1$ then it's exactly the game characterising HD (\cite[Section 3 ]{Henzinger2006-tv}) and therefore HD is $1$-explorable.



\begin{definition}
We define  $\mathsf{Expl}_k\text{-}\mathsf{CFL}$ as the class of languages recognized by $k$-explorable nondeterministic pushdown automata. That is,
\[
\mathsf{Expl}_k\text{-}\mathsf{CFL} = \{ L \subseteq \Sigma^* \mid L \text{ is recognized by a $k$-explorable PDA} \}.
\]
\end{definition}

\begin{definition}
We define $\mathsf{Expl}\text{-}\mathsf{CFL}$ as the class of languages recognized by \emph{explorable} pushdown automata. Formally, $\mathsf{Expl}\text{-}\mathsf{CFL} = \bigcup_{k \in \mathbb{N}} \mathsf{Expl}_k\text{-}\mathsf{CFL}.$
\end{definition}

\begin{definition}[\textit{k}-Run in a \textit{k}-Explorable PDA]
Let $P = (Q, \Sigma, \Gamma, \delta, q_0, Z_0, F)$ be a nondeterministic pushdown automaton, and let $k \in \mathbb{N}$. A \emph{\textit{k}-run} of $P$ on an input word $w = a_1 a_2 \ldots a_n \in \Sigma^*$ in the $k$-explorability setting is a sequence of $k$-tuples of configurations:
\[
\mathcal{R} = (C_0^1, \ldots, C_0^k), (C_1^1, \ldots, C_1^k), \ldots, (C_n^1, \ldots, C_n^k)
\]
such that:
\begin{itemize}
    \item For all $1 \leq l \leq k$, $C_0^l = (q_0, Z_0)$ is the initial configuration (same across all tokens),
    \item For each step $i = 1, \ldots, n$ and each token $l = 1, \ldots, k$, there exists a valid PDA path:
    \[
    C_{i-1}^l \xrightarrow{a_i} C_i^l,
    \]
\end{itemize}

A $k$-run $\mathcal{R}$ is said to be \emph{accepting} if there exists at least one token $l \in \{1, \ldots, k\}$ such that the final configuration $C_n^l = (q_n^l, \gamma_n^l)$ satisfies $q_n^l \in F$ (i.e., the control state is accepting).

\end{definition}

\subsubsection*{Parameterised Explorability}

The notion of $k$-explorability captures bounded nondeterminism by allowing the Determiner
to explore at most $k$ computational branches in parallel. However, in many settings, the amount of explorability required to resolve nondeterminism may grow with the size of the input. To capture this, we extend the explorability  to a \emph{parameterized} framework, where the number of tokens available to Determiner is not fixed but instead given by a function $f(n)$ of the input length~$n$.

\begin{definition}[$f(n)$-Parameterized Explorability Game]
Let \( P = (Q, \Sigma, \Gamma, \delta, q_0, Z_0, F) \) be a nondeterministic pushdown automaton, and let \( f : \mathbb{N} \to \mathbb{N} \) be a computable function. The \emph{$f(n)$-parameterized explorability game} on \( P \) is played between two players, \textbf{Spoiler} and \textbf{Determiner}, as follows:

\begin{itemize}
    \item \textbf{Initialization:} Spoiler announces a number \( n \in \mathbb{N} \), declaring the maximum length of the word he will play. Determiner is then given \( f(n) \) tokens, each representing a copy of the PDA configuration.

    \item \textbf{Gameplay:} is similar to as defined for \textit{k}-Explorability Game. 
\end{itemize}

\noindent The play is \emph{won by Determiner} if for every prefix of the Spoiler's word in \( w \in \Sigma^{\leq n} \cap L(P) \), there exists at least one token \( l \in \{1, \ldots, f(n)\} \) such that the sequence of configurations for token \( l \) forms an accepting run of \( P \) on \( w \). Otherwise, Spoiler wins.
\end{definition}

A PDA \( P \) is said to be \emph{\( f(n) \)-explorable} if Determiner has a winning strategy in the \( f(n) \)-parameterized explorability game on \( P \). Let \(C\) be a class of functions (such as $O(n)$);  we define the language class: 
\[
\mathsf{Expl}_{C}\text{-}\mathsf{CFL} = \{ L \subseteq \Sigma^* \mid \exists \text{ an } f(n)\text{-explorable PDA } P \text{ such that } L = L(P) \text{ and } f(n) \in C \}.
\]

\section{Expressiveness}\label{sec:expressiveness}

It is well known that deterministic pushdown automata (DPDA) recognize the class of deterministic context-free languages (DCFL), which is strictly contained within the class of context-free languages (CFL) recognized by nondeterministic pushdown automata (PDA). Recent work by Guha et al.~\cite{Guha2021-sv} shows that HD-PDA defines a language class (HD-CFL) that properly extends DCFL but is still strictly contained within CFL.

\begin{theorem}[Theorem 4.1~\cite{Guha2021-sv}] $\text{DCFL} \subsetneq \text{HD-CFL} \subsetneq \text{CFL}$
\end{theorem}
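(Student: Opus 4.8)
The two inclusions are immediate from the definitions. For $\text{DCFL} \subseteq \text{HD-CFL}$, observe that in a DPDA every configuration enables at most one transition, so the single token in the $1$-explorability game has a forced run on any input; whenever the input lies in the language this unique run is accepting, so the Determiner wins vacuously and the DPDA is $1$-explorable. For $\text{HD-CFL} \subseteq \text{CFL}$, an HD-PDA is in particular a PDA, and PDAs recognise exactly $\text{CFL}$, so restricting to the HD ones cannot leave $\text{CFL}$. Thus only the two strict separations require work.

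For $\text{DCFL} \subsetneq \text{HD-CFL}$, the plan is to exhibit a witness language $L_{<}$ recognised by an explicit HD-PDA but provably not deterministic context-free. The creative point is to choose $L_<$ so that resolving its nondeterminism requires a choice that is determined by an \emph{unbounded} property of the prefix already read, yet whose verification does not conflict with the stack usage that detects that property---so that the Determiner, whose strategy may be an arbitrary function of the history, can commit the single token to the correct run, while no DPDA can recover the needed information with its bounded control and single stack. Concretely I would (i) present a nondeterministic PDA $P_<$ with $L(P_<)=L_<$, (ii) describe a Determiner strategy that, once the disambiguating prefix has been fully read, commits the token to the run verifying the correct branch, and argue this wins the $1$-explorability game, and (iii) prove $L_< \notin \text{DCFL}$ by invoking the closure of $\text{DCFL}$ under complement and under intersection with regular languages to reduce to a set that is known not to be context-free, yielding a contradiction.

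For $\text{HD-CFL} \subsetneq \text{CFL}$, I would use the concrete witness $L_{>} = \{\, a^i b^j c^k \mid i=j \text{ or } j=k \,\}$, which is context-free as the union of the deterministic context-free languages $\{a^ib^jc^k \mid i=j\}$ and $\{a^ib^jc^k \mid j=k\}$. The substance is to show that \emph{no} PDA for $L_>$ is HD. Fix a candidate $P$ with $L(P)=L_>$ and a Determiner strategy $\sigma$; let Spoiler play a large fixed block $a^N$ and then feed $b$'s one at a time, and let $E_j$ denote the single token's configuration after $a^Nb^j$ (well defined, as $\sigma$ depends only on the history). For Determiner to win, cutting the $b$-block at length $j$ and switching to $c$'s forces the continued run from $E_j$ to accept $c^k$ exactly when $a^Nb^jc^k\in L_>$: that is, for \emph{every} $k$ when $j=N$, but \emph{only} for $k=j$ when $j\neq N$. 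So one online run must, after matching $j=N$ $b$'s, be ready to accept every $c$-suffix, yet after any other length $j$ act as a precise counter accepting $c^j$ alone---while committing its per-$b$ moves before it can know whether $j$ will equal $N$.

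These requirements are irreconcilable, and proving so rigorously is the main obstacle. Informally, accepting every $c$-suffix after $j=N$ forces the run to discard the exact value of $j$ at that point, whereas accepting only $c^j$ after $j\neq N$ forces it to retain $j$, so its stack must grow with $j$; since the run cannot tell at commit-time which regime it is in, a pumping argument on the $b$-segment---inserting copies of a pumpable loop so that the matching position $j=N$ lands on a configuration of the retain-$j$ (exact-counter) type---contradicts the accept-all requirement there. Carrying this out is delicate precisely because HD-ness quantifies over an arbitrary Determiner strategy driving a single run over an unbounded stack, so the pumping must be performed on PDA configurations rather than on finite states. This HD lower bound---ruling out \emph{every} HD-PDA for the target language---is the crux of the theorem, and its mirror image is what makes the first separation hard: there one must instead guarantee that an HD-PDA \emph{does} exist (via an online, past-determined choice) while no DPDA does.
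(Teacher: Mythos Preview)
The paper does not prove this theorem itself; it is quoted from Guha et al.\ as background. So there is no proof in the present paper to compare against for the first strict inclusion. For the second, however, the paper does re-derive $\text{HD-CFL}\subsetneq\text{CFL}$ later as the $k=1$ instance of Lemma~\ref{cl:notkexp}, with witness $\{a^nb^n\mid n\in\mathbb{N}\}\cup\{a^nb^{2n}\mid n\in\mathbb{N}\}$: assuming a $1$-explorable (i.e.\ HD) PDA $\mathcal{A}$, one observes that the single token must accept both $a^nb^n$ and $a^nb^{2n}$ along one run, then builds an auxiliary PDA $\mathcal{A}_c$ that switches to reading $c$'s after the first acceptance; the language of $\mathcal{A}_c$ would then be a non-context-free $L_S$, contradicting Ogden's lemma. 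This route is quite different from yours: instead of pumping the Determiner's run directly, it exports the contradiction to a \emph{language-level} non-context-freeness argument, where standard tools apply without having to reason about history-dependent strategies interacting with unbounded stacks.

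Your plan has two genuine gaps. For $\text{DCFL}\subsetneq\text{HD-CFL}$ you never name $L_<$; the shape you describe is right, but without a concrete language, an explicit HD-PDA, and a verified resolver this is not a proof, and neither the present paper nor your sketch supplies one. For $\text{HD-CFL}\subsetneq\text{CFL}$, your informal pumping argument contains a conceptual slip: you write that ``the run cannot tell at commit-time which regime it is in,'' but in the HD game the Determiner \emph{does} know whether $j=N$ the moment the $j$-th $b$ is played, since the strategy sees the full history including $a^N$. The real obstruction is that no PDA \emph{configuration} reachable by valid transitions can support the required $c$-continuations across all $j$; proving this needs a careful case split (e.g.\ bounded versus unbounded stack height along the $b$-run, and control over how deep the grafted $c$-continuation pops) that you acknowledge as ``delicate'' but do not carry out. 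The paper's auxiliary-PDA technique sidesteps exactly this difficulty.
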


We now turn our attention to the expressive power of \emph{explorable pushdown automata}. Analogous to the results on history-deterministic pushdown automata, we investigate the hierarchy induced by allowing increasing, but bounded, explorability. In particular, we show that $k$-explorability induces an infinite hierarchy, which is strictly contained within CFL:

\begin{theorem}\label{cl:hierarchy} For all \(k\), 
$\mathsf{Expl}_k\text{-}\mathsf{CFL} \subsetneq \mathsf{Expl}_{k+1}\text{-}\mathsf{CFL} \subsetneq \text{CFL}.$
\end{theorem}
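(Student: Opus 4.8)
The statement has two parts: the strict hierarchy $\mathsf{Expl}_k\text{-}\mathsf{CFL}\subsetneq\mathsf{Expl}_{k+1}\text{-}\mathsf{CFL}$, and the strict containment $\mathsf{Expl}_{k+1}\text{-}\mathsf{CFL}\subsetneq\mathrm{CFL}$. The forward inclusion $\mathsf{Expl}_k\text{-}\mathsf{CFL}\subseteq\mathsf{Expl}_{k+1}\text{-}\mathsf{CFL}$ is immediate: a Determiner that wins the $k$-token game on a PDA $P$ also wins the $(k+1)$-token game on the same $P$ by letting the extra token mirror token $1$. So all the content lies in constructing, for each $k$, a separating language together with a matching lower bound.

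As separating witness I would take a union of $k+1$ deterministic context-free languages in which the \emph{selector} indicating which component must accept is revealed only at the very end:
\[
L_{k+1}=\{\, a^{m_0}\#a^{m_1}\#\cdots\#a^{m_k}\,\$\,a^{p}\,\$\,c_i \;\mid\; 0\le i\le k,\ p=m_i \,\},
\]
with $c_0,\dots,c_k$ distinct symbols. Each component $L^{(g)}=\{\dots\$a^{p}\$c_g\mid p=m_g\}$ is deterministic (push the $g$-th block, consume the others while counting the block index in the finite control, then match $a^{p}$ against the stored block), so $L_{k+1}$ is context-free. The crucial design feature is that the value $a^{p}$ \emph{precedes} the selector $c_i$: a single run must commit, while still reading the blocks, to the block it will compare against, before it learns which index is relevant, which rules out deterministic navigation to the correct block. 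For the upper bound I give the Determiner $k+1$ tokens and assign token $g$ to the deterministic run of $L^{(g)}$; this run is driven entirely by the input seen so far, so it is a legal online strategy, and for any $w\in L_{k+1}$ with selector $c_i$ and $p=m_i$ token $i$ accepts. Hence $L_{k+1}\in\mathsf{Expl}_{k+1}\text{-}\mathsf{CFL}$.

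The lower bound $L_{k+1}\notin\mathsf{Expl}_k\text{-}\mathsf{CFL}$ is the heart of the argument and the main obstacle, since it must defeat \emph{every} PDA recognising $L_{k+1}$, not just the canonical one. Fix such a $P$ and suppose the Determiner wins the $k$-token game. The Spoiler first plays the all-equal prefix $\rho_N=(a^{N}\#)^{k}a^{N}\,\$\,a^{N}\,\$$ for large $N$; since $p=N=m_i$ for every $i$, the word $\rho_N c_i$ lies in $L_{k+1}$ for all $k+1$ indices. The tokens' configurations after $\rho_N$ are fixed before the selector is revealed, so, writing $A_l$ for the set of indices $i$ admitting an accepting run of $P$ from token $l$'s configuration on $c_i$, winning forces $\bigcup_l A_l=\{0,\dots,k\}$; by pigeonhole some token reaches a configuration $C=(q,\gamma)$ with $\{i,j\}\subseteq A_l$ for distinct $i\neq j$. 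I would then run a pumping argument on the stack contents along the run producing $C$ to obtain a second prefix $\rho'$ reaching the \emph{same} configuration $C$ in $P$ but with $j$-th block length different from $p$, so that $\rho'c_j\notin L_{k+1}$; splicing the reaching run of $\rho'$ with the accepting $c_j$-run out of $C$ gives an accepting run of $P$ on $\rho'c_j$, contradicting $L(P)=L_{k+1}$. Making this pumping interact correctly with the Determiner's adaptively chosen runs---guaranteeing that the pumped prefix is realisable in $P$, still collides at $C$, and genuinely falsifies the $j$-th equality---is the delicate point.

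Finally, $\mathsf{Expl}_{k+1}\text{-}\mathsf{CFL}\subsetneq\mathrm{CFL}$ follows from the same family one level up: $L_{k+2}\in\mathsf{Expl}_{k+2}\text{-}\mathsf{CFL}\subseteq\mathrm{CFL}$ is context-free yet, by the lower bound, not $(k+1)$-explorable. The stronger separation $\mathsf{Expl}\text{-}\mathsf{CFL}\subsetneq\mathrm{CFL}$ announced in the introduction then follows by letting the number of blocks grow with the input length, yielding a single context-free language for which the same counting argument shows that no constant number of tokens ever suffices.
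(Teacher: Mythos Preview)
Your overall plan is sound and the upper bounds are fine, but the lower-bound argument has a real gap at exactly the point you flag as ``delicate.'' After the all-equal prefix $\rho_N$ you obtain a configuration $C=(q,\gamma)$ with accepting continuations on both $c_i$ and $c_j$, and you then want to pump inside the run reaching $C$ so as to change only $m_j$ while still arriving at $C$. Standard run-pumping for PDAs does not give you this: repeating a segment of the run inserts extra material into the stack, so the pumped run ends in some $C'=(q,\gamma')$ with $\gamma'\neq\gamma$, and since the accepting suffix on $c_j$ may pop arbitrarily deep into $\gamma$ via $\varepsilon$-moves, there is no reason it still accepts from $C'$. Nor can you force the pumped segment to lie inside the $j$-th $a$-block: Ogden-style marking controls input position when pumping the \emph{language}, not when pumping one fixed run to one fixed configuration. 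So as written the sketch does not establish $L_{k+1}\notin\mathsf{Expl}_k\text{-}\mathsf{CFL}$, and consequently your derivation of $\mathsf{Expl}_{k+1}\text{-}\mathsf{CFL}\subsetneq\mathrm{CFL}$ via $L_{k+2}$ inherits the same gap.

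The paper avoids this problem by choosing a witness whose competing targets are \emph{nested prefixes} rather than sibling one-letter continuations: $L=\bigcup_{i=1}^{k+1}\{a^n b^{in}\}$. On input $a^n b^{(k+1)n}$ with $k$ tokens, pigeonhole forces some token's single run to be in an accepting state at both prefixes $a^n b^{in}$ and $a^n b^{jn}$ with $i<j$---one acceptance is reached \emph{along the way} to the other. This lets the paper \emph{build a second PDA} (duplicate the states, relabel $b\to c$ in the copy, add $\varepsilon$-edges from accepting states of the original into the copy) that then accepts $a^n b^{in} c^{(j-i)n}$; the resulting language, intersected with $a^*b^*c^+$, is shown non-context-free by an ordinary Ogden argument on the language itself. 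No configuration-preserving run-pumping is ever needed. If you want to rescue your block-and-selector language you would have to replace the pumping step by a comparable structural construction, or rearrange the selectors so that the competing targets become prefixes of one another; neither is in your sketch. For the strict inclusion in $\mathrm{CFL}$ the paper likewise uses a single language $L_{\mathrm{block}}=\{(a^*\#)^*b^n\mid n\text{ is some block length}\}$ together with the same continue-after-first-acceptance construction, rather than stepping up to $L_{k+2}$.
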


\subsection{\texorpdfstring{$\text{Expl}_k\text{-}\text{CFL} \subsetneq \text{Expl}_{k+1}\text{-}\text{CFL}$}{(k+1)-explorable PDAs are strictly more expressive than k-explorable ones}}

To prove that $(k+1)$-explorable pushdown automata are strictly more expressive than $k$-explorable ones, we consider, for each $i \in \{1, \dots, k\}$, the language $L_i := \{ a^n b^{in} \mid n \in \mathbb{N} \}.$ We then define the language $L := \bigcup_{i=1}^{k+1} L_i.$

We claim that $L$ is accepted by a $k+1$-explorable pushdown automaton but cannot be recognized by any $k$-explorable pushdown automaton. Intuitively, accepting $L$ requires distinguishing among $k$ different linear relations between the number of $a$'s and $b$'s, each corresponding to a different multiplicative factor $i$. This requires $k$ distinct tokens, one for each $L_i$, which cannot be simulated with fewer runs. 




\begin{lemma}
The language $L$ is recognized by a $(k+1)$-explorable pushdown automaton.
\end{lemma}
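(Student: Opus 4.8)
The plan is to build a single nondeterministic PDA $P$ whose only nondeterminism is an initial guess of the factor $i \in \{1,\dots,k+1\}$, and then to have Determiner devote one token to each guess. First I would observe that each individual language $L_i = \{a^n b^{in} \mid n \in \mathbb{N}\}$ is deterministic context-free: a DPDA $P_i$ can push one marker per $a$ while reading $a^n$, then move to a $b$-reading phase in which it counts incoming $b$'s modulo $i$ using $i$ auxiliary states, popping one marker for every complete block of $i$ consecutive $b$'s, and accepts exactly when the input ends with the stack emptied at a block boundary. For each fixed $i$ this automaton is deterministic, so the run of $P_i$ on any prefix is uniquely determined by that prefix.

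Next I would assemble $P$ by introducing a fresh initial state $q_0$ with $\varepsilon$-transitions into the start states of $P_1,\dots,P_{k+1}$; this initial fan-out is the only nondeterministic choice in the whole machine. A word is accepted along the branch for $i$ if and only if it lies in the corresponding $L_i$, so $L(P) = \bigcup_{i=1}^{k+1} L_i = L$, establishing correctness of the construction.

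Then I would exhibit a winning strategy for Determiner in the $(k+1)$-explorability game on $P$. At the first step Determiner commits token $i$ (for $1 \le i \le k+1$) to the $\varepsilon$-transition entering branch $P_i$; thereafter, since each $P_i$ is deterministic, the continuation of token $i$ is forced by the input seen so far, so Determiner can follow it with no lookahead, exactly as the game permits. For any $w \in L$ there is some $i$ with $w \in L_i$, and the deterministic run of $P_i$ on $w$, which is precisely the run carried by token $i$, is accepting; hence for every accepted prefix at least one token accepts, Determiner wins, and $P$ is $(k+1)$-explorable.

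The construction itself is routine; the only point needing care is that the $k+1$ concurrent runs must be resolved strictly forward, without knowing which $L_i$ the eventual word belongs to. This is exactly what the multiplicity of tokens buys us: because every branch past the initial guess is deterministic, no genuine choices remain, so the forward-only restriction of the game becomes vacuous once the one-token-per-branch assignment is fixed at the start. I expect this to be the only conceptual obstacle, and the argument above dispatches it.
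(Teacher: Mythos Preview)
Your proposal is correct and follows essentially the same approach as the paper: build a deterministic PDA for each $L_i$, glue them together via an initial nondeterministic $\varepsilon$-fan-out, and have Determiner assign one token to each branch. The only cosmetic difference is that the paper's DPDA for $L_i$ pushes $i$ symbols per $a$ and pops one per $b$, whereas you push one per $a$ and pop one per block of $i$ $b$'s; both are valid and the explorability argument is identical.
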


\begin{proof}
For each \(i\), \(L_i\) is recognized by a DPDA $D_i$ that pushes $i$ symbols onto the stack at each $a$ and then compares the number of $b$'s to the stack height by popping the stack as it reads $b$. The automaton that initially nondeterministically chooses among $\varepsilon$-transitions to the initial states of each $D_i$ for $i\in [1..k+1]$ clearly recognises $L$ and is $k+1$-explorable since Determiner can win the $k+1$-explorability game by having each $D_i$ explored by a distinct token.
\end{proof}

\paragraph*{L is not \( k \)-explorable}
We observe that a $k$-explorable PDA $P$ would have to accept some words $a^nb^{\ell}$ and their continuation $a^n b^{\ell'}$ along the same run, as Determiner does not have enough tokens to check membership to each $L_i$ with a disjoint run. Therefore, there would be some accepting runs that can be extended into another accepting run via more $b$-transitions. Then, we build a PDA $P'$ which is similar to $P$, except that after seeing an accepting state, it can also read $c$'s instead of $b$'s. We show, using a pumping argument, that this automaton would recognise a language that is not CFL, leading to a contradiction. The difficulty in this proof is that we do not have an exact description for the language recognised by $P'$ as it depends on exactly which pairs of words $P$ accepts along a single run, and the hypothesis of $k$-explorability only tells us that this must occur for some pairs $\ell,\ell'$ for each $n$. However, we can describe $L(P')$ in enough detail to show that it is, in any case, not CFL.

We define the language $L_{i,j,n} := \{ a^n b^{in} c^{jn} \},$ where \( i, j,n \in \mathbb{N} \). Each \( L_{i,j,n} \) consists of strings with block sizes linearly dependent on \( n \), with coefficients \( i \) and \( j \) for the lengths of the \( b \) and \( c \) blocks, respectively.

Now we define a language $L_S$ such that for each \( n \in \mathbb{N} \), the set of words of the form \( a^n b^m c^k \in L' \) is contained in a finite and nonempty union of such languages \( L_{i,j},n \), but the choice of the coefficents \(i\)s and \(j\)s  may vary with \( n \). Formally, given $S=(S_n)_{n\in \mathbb{N}}$ where each \( S_n \subseteq \mathbb{Z^+} \times \mathbb{Z^+} \) is finite  and non-empty.
\[
L_S := \bigcup_{n \in \mathbb{N}} \left( \{ a^n \} \cdot \bigcup_{(i,j) \in S_n} \{ b^{in} c^{jn} \} \right),
\]

That is, for every \( n \in \mathbb{N} \), \(L_S\) consists of a finite union of languages of the form \( L_{i,j,n} \). In this way, \( L_S\) represents a ``non-uniform'' or ``n-dependent'' union of linearly structured languages.

\begin{restatable}{lemma}{LSisnotCFG}\label{lem:main}
Given $S=(S_n)_{n\in \mathbb{N}}$ where each \( S_n \subseteq \mathbb{Z^+} \times \mathbb{Z^+} \) is finite  and non-empty, the language \( L_S \) is not context-free. 
\end{restatable}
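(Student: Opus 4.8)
The plan is to assume toward a contradiction that $L_S$ is context-free and apply the pumping lemma for context-free languages. The crucial preliminary observation is a clean necessary membership condition: since every word of $L_S$ has the shape $a^{n'} b^{i'n'} c^{j'n'}$, any word $a^{n'} b^{m} c^{l} \in L_S$ must satisfy $n' \mid m$ and $n' \mid l$, i.e. the number of $a$'s divides both the number of $b$'s and the number of $c$'s. This divisibility condition is exactly what I will violate by pumping.

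I would fix the pumping length $p$ and choose a witness word $w = a^n b^{in} c^{jn}$ where $n > p$ and $(i,j) \in S_n$ is an arbitrary pair (which exists since each $S_n$ is non-empty, with $i,j \ge 1$ as $S_n \subseteq \mathbb{Z}^+ \times \mathbb{Z}^+$). The choice $n > p$ guarantees that all three blocks have length exceeding $p$. Taking a pumping decomposition $w = uvxyz$ with $|vxy| \le p$ and $|vy| \ge 1$, the first routine step is to note that $v$ and $y$ must each lie within a single block, since otherwise $uv^2xy^2z$ would leave $a^*b^*c^*$; hence $vy$ involves at most two distinct letters.

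The argument then splits according to whether $vy$ contains an $a$. If it does, then pumping up drives the number of $a$'s to infinity, while (because $vy$ spans at most two letter-types, one of them $a$) at least one of the $b$- or $c$-blocks is left completely untouched and so keeps a fixed positive length. Since membership forces the unbounded number of $a$'s to divide this fixed positive count, a large pumping exponent yields a word outside $L_S$, a contradiction. If $vy$ contains no $a$, then the number of $a$'s remains $n$, and pumping down to $k=0$ deletes $|vy|$ symbols from the $b$- and $c$-blocks; divisibility by $n$ would force $n$ to divide each deleted amount, but their total is between $1$ and $p < n$, so the nonzero one cannot be a multiple of $n$, again a contradiction.

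The point I expect to be the main obstacle, and the reason naive pumping at the $a$-block seems to fail, is that the coefficients $i$ and $j$ are entirely uncontrolled: for a fixed $n$ the set $S_n$ is adversarial, so $in$ and $jn$ may have enormous numbers of divisors near $n$, defeating any attempt to expel a pumped-up word from $L_S$ using only a bounded change to the $a$-count. The key idea that dissolves this is to pump the $a$-block \emph{unboundedly} rather than by a bounded amount: a quantity tending to infinity cannot keep dividing a fixed positive integer, no matter how many divisors that integer has. This reduces everything to the two clean cases above. (If preferred, Ogden's lemma with the $a$'s marked forces $vy$ to contain an $a$ and isolates the first case directly, but plain pumping with $n > p$ already suffices.)
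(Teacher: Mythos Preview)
Your proof is correct and shares the paper's core idea: the key observation that membership in $L_S$ forces the $a$-count to divide both the $b$- and $c$-counts, and that pumping the $a$-block unboundedly eventually makes this impossible against a fixed untouched block. The paper carries this out with Ogden's Lemma, marking the $a$'s so that $vy$ is forced to contain an $a$; this collapses everything to your Case~1 and avoids the pump-down argument entirely. You instead use the plain pumping lemma with $n>p$, which is more elementary but costs you the extra Case~2 (no $a$ in $vy$), handled cleanly by pumping down and noting $0<|vy|\le p<n$ cannot be a multiple of $n$. You even flag the Ogden shortcut yourself, so the two write-ups are really the same argument with a different choice of pumping tool; yours trades a slightly longer case analysis for not invoking Ogden.
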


The proof is a pumping argument, detailed in the Appendix.

\begin{lemma}\label{cl:notkexp}
The language \( L := \bigcup_{i = 1}^{k+1} L_i \), where \( L_i := \{ a^n b^{in} \mid n \in \mathbb{N} \} \), is not recognized by any \( k \)-explorable PDA.
\end{lemma}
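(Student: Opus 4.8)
The plan is to argue by contradiction: suppose $L$ is recognized by some $k$-explorable PDA $P$, and fix a winning strategy for Determiner in the $k$-explorability game on $P$. The starting point is a pigeonhole argument. Fix $n \in \mathbb{N}$ and let Spoiler play the word $a^n b^{(k+1)n}$. Among its prefixes, the $k+1$ words $a^n b^{1\cdot n}, a^n b^{2\cdot n}, \ldots, a^n b^{(k+1)n}$ all lie in $L$ (the $i$-th one belongs to $L_i$). Since the play is won by Determiner, at each such prefix some token is in an accepting state; as there are only $k$ tokens but $k+1$ accepting prefixes, two of them, say at lengths $i_n n$ and $i'_n n$ with $1 \le i_n < i'_n \le k+1$, are witnessed by the \emph{same} token. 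Because each token follows a single run of $P$ throughout the play, this yields one run of $P$ that is accepting after reading $a^n b^{i_n n}$ and, continuing, is again accepting after reading $a^n b^{i'_n n}$. Setting $j_n := i'_n - i_n$, this run reads $j_n n$ further $b$'s between the two accepting configurations, and note $i_n, j_n \in \{1,\dots,k\}$.

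Next I would build the auxiliary PDA $P'$ that reads $c$'s in place of these ``extra'' $b$'s. Concretely, $P'$ keeps two disjoint copies of $P$: a $b$-copy behaving exactly like $P$, and a $c$-copy in which every $b$-transition of $P$ is relabelled to read $c$ (keeping $\varepsilon$-transitions). $P'$ starts in the $b$-copy, may take an $\varepsilon$-transition from any accepting state of the $b$-copy into the corresponding state of the $c$-copy, and accepts whenever it is in a copy of an accepting state of $P$. By construction $P'$ is an ordinary nondeterministic PDA, so $L(P')$ is context-free, and hence so is $L(P') \cap a^{+}b^{+}c^{+}$ by closure of CFLs under intersection with regular languages.

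I then describe $L(P') \cap a^{+}b^{+}c^{+}$ precisely enough. On the one hand, the run from the first paragraph, switched into the $c$-copy at the accepting configuration reached after $a^n b^{i_n n}$, accepts $a^n b^{i_n n} c^{j_n n}$; hence $(i_n, j_n)$ witnesses membership for each $n \ge 1$. On the other hand, any accepting run of $P'$ on a word in $a^{+}b^{+}c^{+}$ must read $a^p b^q$ in the $b$-copy, reach an accepting configuration (the switch point), and then read $c^r$ in the $c$-copy; translating the $c$-copy segment back to $b$'s shows that $P$ accepts both $a^p b^q$ and $a^p b^{q+r}$. Since $L(P) = L$ and $p \ge 1$, this forces $q = ip$ and $q+r = i'p$ for some $1 \le i < i' \le k+1$, i.e. $r = (i'-i)p$. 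Letting $S_n := \{(i,j) \in \{1,\dots,k\}^2 : a^n b^{in} c^{jn} \in L(P')\}$, each $S_n$ is finite (a subset of $\{1,\dots,k\}^2$) and nonempty (it contains $(i_n, j_n)$), and we obtain $L(P') \cap a^{+}b^{+}c^{+} = L_S$ up to the empty word. But Lemma~\ref{lem:main} says $L_S$ is not context-free, and adjoining or deleting $\varepsilon$ preserves context-freeness, so this contradicts the context-freeness of $L(P') \cap a^{+}b^{+}c^{+}$.

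Main obstacle: the delicate part, as already flagged, is that we never get an exact handle on which pairs $P$ accepts along a single run, so $L(P')$ cannot be pinned down explicitly; the argument must extract just enough structure --- finiteness of each $S_n$ from $L(P)=L$, and nonemptiness from $k$-explorability via pigeonhole --- to match the hypotheses of Lemma~\ref{lem:main}. Care is also needed in the game semantics to justify that ``the same token is accepting at two checkpoints'' really gives a single run of $P$ whose accepting configuration at $a^n b^{i_n n}$ can be continued, through $\varepsilon$- and $b$-moves, to an accepting configuration at $a^n b^{i'_n n}$; this is exactly what licenses the faithful relabelling of those $b$-moves as $c$-moves in $P'$.
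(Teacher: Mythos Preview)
Your proof is correct and follows essentially the same approach as the paper: a pigeonhole argument on the Determiner's tokens to find a single run accepting two checkpoints $a^nb^{i n}$ and $a^nb^{i'n}$, the two-copy construction $P'$ relabelling post-acceptance $b$'s to $c$'s, and then invoking Lemma~\ref{lem:main} on the resulting language $L_S$. Your write-up is in fact a bit more careful than the paper's in establishing the two-sided description of $L(P')\cap a^+b^+c^+$ (the paper only argues one inclusion explicitly) and in tracking that $S_n\subseteq\{1,\dots,k\}^2$, but the underlying argument is the same.
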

\begin{proof}
Assume, towards contradiction, that there exists a \( k \)-explorable PDA \( \mathcal{A} \) that recognizes \( L \). 
Since \( L \) includes strings of the form \( a^n b^{in} \) for \( i \in \{1, 2, \dots, k+1\} \), and we only have \( k \) tokens, by the pigeonhole principle, for any fixed \( n \), for any winning Determiner strategy, there must be at least one token that accepts two different strings \( a^n b^{in} \) and \( a^n b^{jn} \) for some  $i < j $. That means that the run built by this token visits an accepting state at least twice. We define $S_n$ as the pairs $i,j$ such that there is an accepting run over $a^nb^{jn}$ of which a prefix accepts $a^nb^{in}$. By the pigeonhole principle, that $S_n$ is non-empty for all $n$.

We use this fact to construct a new PDA \( \mathcal{A}_c \) that accepts the language $L_S$, for $S=(S_n)_{n\in \mathbb{N}}$ as described above. As previously shown, \( L_S \) is not context-free, so the existence of such a PDA will lead to a contradiction.

Let \( \mathcal{A} = (Q, \Sigma, \Gamma, q_I, \Delta, F) \), with \( Q = \{q_0, q_1, \dots, q_n\} \) and \( q_I = q_0 \). We define a modified PDA \( \mathcal{A}_c = (Q \cup Q', \Sigma', \Gamma, q_I, \Delta', F') \), where:
\begin{itemize}
    \item \( \Sigma' = \{a, b, c\} \),
    \item \( Q' = \{ q'_0, q'_1, \dots, q'_n \} \) is a copy of \( Q \),
    \item \( F' = \{ q'_f \mid q_f \in F \} \),
    \item \( \Delta' = \Delta \cup \Delta_c \), with \( \Delta_c \) defined as:
    \begin{enumerate}
        \item For every \( q_f \in F \), add \((q_f, X, \varepsilon, q'_f, X) \) for all \( X \in \Gamma_\perp \) (transferring control to the new state space),
        \item For every transition \( (q_i, X, b, q_j, \gamma) \in \Delta \), add \( (q'_i, X, c, q'_j, \gamma) \in \Delta_c \) — replacing the input symbol \( b \) with \( c \) in the copied state space.
        \item For every \( (q_i, X, \varepsilon, q_j, \gamma) \in \Delta \), add
        \( (q'_i, X, \varepsilon, q'_j, \gamma) \in \Delta_c. \) - to replicated \( \varepsilon \)-transitions
    \end{enumerate}
\end{itemize}

Intuitively, the modified PDA simulates the original input \( a^n b^{in} \) on \( \mathcal{A} \), and then, upon acceptance, continues in the copied state space simulating a second phase where each \( b \) transition is replaced by a \( c \)-transition, thereby accepting strings of the form \( a^n b^{in} c^{(j-i)n} \).

For each pair in some $S_n$, there is an accepting run in $P$ that extends into another accepting run. Therefore $P'$ accepts the word \( a^n b^{in} c^{(j-i)n} \) for each $(i, j)\in S_n$. Hence \(A_c\) accepts \( L_S \cup L  \). This implies that \( L_S \) is context-free (by intersection with the regular language $a^*b^*c^*$) — a contradiction, since we have already shown in Lemma~4.2 that \( L_S \not\in \textsf{CFL} \).

Thus, no \( k \)-explorable PDA can recognize \( L \).
\end{proof}
We conclude the first half of \textbf{Theorem~8}: 
For all \( k \in \mathbb{N} \), the class of languages recognized by \( (k+1) \)-explorable pushdown automata strictly contains the class of languages recognized by \( k \)-explorable pushdown automata. That is, $\mathsf{Expl}_k\text{-}\mathsf{CFL} \subsetneq \mathsf{Expl}_{k+1}\text{-}\mathsf{CFL}.$

\subsection{\texorpdfstring{$\text{Expl}\text{-}\text{CFL} \subsetneq \text{CFL}$}{Expl-CFL is a proper subset of CFL}}

To separate the class of explorable context-free languages from the full class of context-free languages, we consider the following language:
\[
L_{block} := \left\{ (a^* \#)^* b^n \mid n = \text{length of some } a\text{-block} \right\}.
\]

Intuitively, \( L_{block} \) consists of strings formed by a sequence of \( a \)-blocks separated by $\#$, followed by a block of \( b \)'s whose length matches the length of some previous \( a \)-block. For example, the string $a^3\#a^5\#a^2\#b^5 \in L_{block},$ since the final \( b \)-block has length \( 5 \), which matches the length of the second \( a \)-block.

While $L_{block}$ is clearly CFL, the intuition is that each of Determiner's tokens can only compare one block of $a$'s to the final block of $b$'s, and as the number of $a$-blocks is unbounded, so is the number of tokens that she would need to win the explorability game, making this language not explorable.


\begin{lemma} \label{L-block-inCFL}
    \( L_{block} \in \textbf{CFL} \) 
\end{lemma}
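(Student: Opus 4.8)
The plan is to construct a nondeterministic PDA recognizing $L_{block}$ directly. The language $L_{block} = \{(a^*\#)^* b^n \mid n \text{ equals the length of some } a\text{-block}\}$ is clearly context-free because nondeterminism lets the automaton guess which $a$-block to match against the final $b$-block. Concretely, I would have the automaton read through the sequence of $a$-blocks separated by $\#$, and at the start of some nondeterministically chosen block, begin pushing one stack symbol per $a$ onto the stack. Once that chosen block ends (at the next $\#$), the automaton stops pushing and simply consumes the remaining $a$-blocks and $\#$ symbols without modifying the stack (beyond ignoring them), waiting for the $b$-phase to begin.

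\begin{proof}
We describe a nondeterministic PDA $P$ recognizing $L_{block}$. The automaton processes the input in three conceptual phases, using nondeterminism to guess which $a$-block will be matched. In the first phase, $P$ reads $a$-blocks separated by $\#$, pushing nothing onto the stack (beyond the bottom marker $Z_0$) and cycling through the $a$'s and $\#$'s freely. At the beginning of some block, chosen nondeterministically via an $\varepsilon$-transition, $P$ enters a \emph{counting} state in which it pushes one stack symbol $A$ for each $a$ it reads. Upon reaching the $\#$ that terminates this counted block, $P$ moves to a \emph{skipping} state, in which it consumes any further $a$'s and $\#$'s while leaving the stack untouched. Finally, upon reading the first $b$, $P$ enters a \emph{matching} phase, popping one $A$ for each $b$ read; it accepts precisely when the stack has returned to $Z_0$ exactly as the input ends, i.e.\ when the number of $b$'s equals the number of $A$'s pushed, which equals the length of the guessed $a$-block.

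A word $w = a^{m_1}\# a^{m_2}\# \cdots a^{m_r}\# b^n$ is accepted iff there is some block index $p$ whose length $m_p$ was pushed and $n = m_p$. Thus $P$ accepts $w$ iff $n = m_p$ for some $p$, which is exactly the membership condition for $L_{block}$. Since $P$ is a pushdown automaton, $L_{block}$ is context-free.
\end{proof}

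The construction is entirely routine; there is no real obstacle, as the only role of nondeterminism is the single guess of which block to count, and the stack usage is the standard count-and-match pattern. The only point requiring mild care is ensuring the format $(a^*\#)^*$ is parsed correctly (in particular allowing empty $a$-blocks and handling the transition from $a$'s and $\#$'s into the final $b$-block), but this is handled by standard finite-state bookkeeping in the control states and does not affect context-freeness.
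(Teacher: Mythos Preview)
Your proof is correct and takes essentially the same approach as the paper: nondeterministically guess an $a$-block, push its length onto the stack, skip the remaining blocks, and pop against the $b$'s. Your write-up is simply a more detailed version of the paper's one-line argument.
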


\begin{proof}
A nondeterministic PDA can guess and store the length of one of the \( a \)-blocks on the stack, then skip the rest of the input until it reaches the \( b \)-block, and verify the count by popping the stack.
\end{proof}

\begin{lemma} \label{L-block-notexp}
The language \( L_{block} := \{ (a^* \#)^*  b^n \mid n = \text{length of some } a\text{-block} \} \) is not recognized by an explorable PDA.
\end{lemma}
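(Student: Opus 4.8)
The plan is to follow the template of Lemma~\ref{cl:notkexp}: assume for contradiction that $L_{block}$ is recognised by an explorable PDA $P$, so that $P$ is $k$-explorable for some fixed $k$, and then manufacture from $P$ a context-free language that contradicts a non-context-freeness fact of the kind proved in Lemma~\ref{lem:main}. The point is that $L_{block}$ has unboundedly many $a$-blocks, so $k+1$ of them already overwhelm the $k$ tokens. Concretely, I let $n$ be a large integer (fixed only after $P$, and larger than the Ogden constant of the language built below) and have Spoiler play the prefix $a^{n}\#a^{2n}\#\cdots\#a^{(k+1)n}\#$ followed by $b$'s. For each $\ell\in\{1,\dots,k+1\}$ the prefix ending in $b^{\ell n}$ lies in $L_{block}$, so under any winning Determiner strategy some token is accepting at that moment; with only $k$ tokens and $k+1$ distinct block lengths, the pigeonhole principle gives a single token whose run is accepting both after $b^{sn}$ and after $b^{s'n}$ for some $1\le s<s'\le k+1$. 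Thus $P$ has an accepting run on $a^{n}\#\cdots\#a^{(k+1)n}\#b^{s'n}$ a prefix of which already accepts $\ldots b^{sn}$.

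Next I reuse verbatim the $c$-modified automaton $\mathcal{A}_c$ from the proof of Lemma~\ref{cl:notkexp}: after any accepting state, $P$ may pass by $\varepsilon$ into a primed copy in which every former $b$-transition instead reads a $c$, and the primed accepting states are final. Writing $P'$ for this automaton and setting $R:=(a^+\#)^{k+1}b^+c^+$, the language $\mathcal{L}:=L(P')\cap R$ is context-free. The pigeonhole fact shows that $\mathcal{L}$ contains $a^{n}\#\cdots\#a^{(k+1)n}\#b^{sn}c^{(s'-s)n}$, so it is non-empty. Conversely---and this is all the description I will need---since the primed copy has no $a$- or $\#$-transitions, every $w'\in\mathcal{L}$ has the form $a^{m_1}\#\cdots\#a^{m_{k+1}}\#b^{p}c^{q}$ where $P$ accepts both $a^{m_1}\#\cdots\#b^{p}$ and $a^{m_1}\#\cdots\#b^{p+q}$; hence both $p$ and $p+q$ must equal some $a$-block length $m_\ell$. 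As in Lemma~\ref{lem:main}, I never learn which pair of blocks the token matched, only that such a pair exists.

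The crux, and the step I expect to be hardest, is to show that $\mathcal{L}$ is not context-free despite this lack of an exact description. I plan to do so by a pumping argument in the spirit of Lemma~\ref{lem:main}, applying Ogden's lemma to the witness word above with all of its $b$'s marked (there are $sn\ge n$ of them, more than the Ogden constant). No pumped factor can contain a $\#$ (else the number of $\#$'s, hence membership in $R$, changes) nor straddle the $b/c$ boundary (else the $b^*c^*$ order is destroyed), so each pumped factor lies inside a single $a$-block, inside the $b$-block, or inside the $c$-block, and at least one of them carries a marked $b$. Pumping therefore increases the $b$-count by some $\delta$ with $1\le\delta<n$ while altering at most one $a$-block (by some $\delta_a<n$). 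The arithmetic spacing $m_\ell=\ell n$ is engineered to make this fatal: the new $b$-count $sn+\delta$ is not a multiple of $n$, so it can equal a block length only if the single altered block was block $s$ and was shifted by exactly $\delta$; but then the $c$-block is untouched, the new combined number of $b$'s and $c$'s is $s'n+\delta$, which is again not a multiple of $n$ and differs from the only non-multiple block $sn+\delta$ (as $s\ne s'$), so it matches no block length. Either way the necessary condition of the previous paragraph fails and the pumped word leaves $\mathcal{L}$, contradicting its context-freeness. Hence no $k$ suffices, and $L_{block}$ is not recognised by any explorable PDA.
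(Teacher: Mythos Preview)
Your proof is correct and follows essentially the same strategy as the paper's: assume $k$-explorability, use pigeonhole to force one token to accept at two distinct $b$-lengths, pass to the $c$-modified automaton $P'$, and derive a contradiction by showing $L(P')\cap R$ violates Ogden's lemma. The only cosmetic difference is that you pick arithmetically spaced block lengths $n,2n,\ldots,(k+1)n$ and pump once (exploiting that $sn+\delta$ is not a multiple of $n$), whereas the paper uses consecutive lengths $p,p+1,\ldots,p+k$ and pumps many times; both work. One small slip: your claim that the altered $a$-block changes by some $\delta_a<n$ is not guaranteed by the form of Ogden's lemma used here (the bound is only on \emph{marked} positions, and $v$ inside an $a$-block carries none), so you cannot conclude that the altered block ``was block $s$ shifted by exactly $\delta$''---but this is harmless, since all your argument actually needs is that after pumping there is at most one block whose length is not a multiple of $n$, and that single value cannot simultaneously equal both $sn+\delta$ and $s'n+\delta$.
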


\begin{proof}
Assume, towards a contradiction, that for some $k$, \( L_{block} \in \mathsf{Expl}_k\text{-}\mathsf{CFL} \), i.e., it is recognized by an $k$-explorable PDA. Observe that  \(\mathsf{Expl}_k\text{-}\mathsf{CFL} \) is closed under intersection with regular languages since taking a product with a deterministic finite automaton does not increase the explorability of the PDA. 

Let us take the regular language \( R = (a\#)^{k+1} b^* \). Then the intersection
\[
L_{k-block} := L_{block} \cap R = \left\{ (a^* \#)^{k+1} b^n \mid n = \text{length of some } a\text{-block} \right\}
\]
must also be in \( \mathsf{Expl}_k\text{-}\mathsf{CFL} \) under our assumption.

So the problem comes down to showing that \( L_{k-block} \notin \mathsf{Expl}_k\text{-}\mathsf{CFL} \). Assume, towards a contradiction, that \( L_{k-block} \) is recognized by a \( k \)-explorable PDA $A$. Let \(w\) be a word in $(a^* \#)^{k+1}$. In the $k$-explorability game, since there are  $k+1$ \(a\)-blocks but only $k$ tokens, if all the blocks are of distinct lengths then, at least one token will accept both a word and its prefix. Therefore, for each such $w$, there are some accepting runs that can be extended into another accepting run via more $b$-transitions. Let $S'_w$ be the set of pairs of integers such that both $wb^i$ and $wb^j$ are accepted along the same run. Note that for each \((i,j) \in S'_w \), both $i$ and $j$ match the length of some $a$-block in $w$ and \(j>i\). Call $S'=(S'_w)_{w\in W}$ where $W$ consists of $(a^* \#)^{k+1}$ such that $S'_w$ is non-empty. Note that $S'_w$ is necessarily non-empty for $w$ in which the block lengths are all distinct, and may or may not be empty otherwise.

As in ~\cref{cl:notkexp} we can then build a second PDA, in which $\varepsilon$-transitions move from the final states of a first copy of $A$ to the same state in another copy, in which $b$-transitions are replaced by $c$-transitions.
Taking the product with a DFA for $(a^* \#)^{k+1}b*c*$, this results in a PDA $A'$ for the language:
\[
 L_{block- S'} := \left\{ w b^i c^{i - j} \mid  w \in (a^* \#)^{k+1} , (i,j)\in S'_w \right\}.
\]

By construction, if \( L_{k-block} \in \mathsf{Expl}_k\text{-}\mathsf{CFL} \), then \( L_{block- S'} \) is accepted by the $A'$. However, from~\cref{lem:LblockisnotCFL}, this language is not $\mathsf{CFL}$. This contradicts that it is recognized by a PDA.

Hence, \( L_{k-block} \notin \mathsf{Expl}_k\text{-}\mathsf{CFL} \), and therefore \( L_{block} \notin \mathsf{Expl}_k\text{-}\mathsf{CFL} \) as well.
\end{proof}

\begin{restatable}{lemma}{LblockisnotCFL}\label{lem:LblockisnotCFL}
Given \(S' = (S'_w)_{w \in (a^* \#)^{k+1}}\) in which some $S'_w$ are non-empty, then the language \( L_{block-S'} \) is not context-free.
\end{restatable}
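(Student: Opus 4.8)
The plan is to prove $L_{block-S'}\notin \mathsf{CFL}$ by a pumping argument, mirroring the structure of \cref{lem:main} but adapting it to handle the extra complication that the relevant lengths are now tied to $a$-blocks inside $w$ rather than to a single exponent $n$. Recall that
\[
L_{block-S'} = \left\{ w\, b^i c^{i-j} \mid w \in (a^*\#)^{k+1},\ (i,j)\in S'_w \right\},
\]
where each $(i,j)\in S'_w$ satisfies $j>i$ (wait: by the construction the $c$-block has length $i-j$, so in fact $i>j$; I will use the convention from the lemma that $i>j\geq 1$ and both $i,j$ equal lengths of some $a$-blocks of $w$). The essential structural fact I would exploit is that every word in the language has \emph{exactly} $k+1$ occurrences of $\#$, and that the $b$-count and $c$-count are each determined by the lengths of $a$-blocks appearing in the $w$-prefix, so they cannot be altered independently of those blocks.

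First I would invoke the pumping lemma (or, for finer control over which positions get pumped, Ogden's lemma) to obtain a pumping length $p$. The main step is to select a hard witness word $z = w\, b^i c^{i-j}\in L_{block-S'}$ whose $a$-block lengths are chosen to grow fast enough (e.g.\ with $k+1$ blocks of lengths that are large, pairwise distinct, and well-separated, such as $a^{p}\#a^{2p}\#\cdots$ scaled up) that any short factor $vxy$ with $|vxy|\le p$ can straddle at most one ``feature'' of the word. Then I would do a case analysis on where the pumped factors $v,y$ sit: (i) if either $v$ or $y$ contains a $\#$, pumping changes the number of $\#$'s away from $k+1$, leaving the language; (ii) if $v,y$ lie within a single $a$-block, pumping changes that block's length while leaving the $b$- and $c$-counts fixed, and by the distinctness/separation of block lengths this destroys the required matching $i,j\in\{\text{block lengths}\}$ or the difference $i-j$; (iii) if $v,y$ lie in the $b$-block or $c$-block, pumping changes $i$ or $i-j$ without changing the available block lengths, again breaking membership. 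The key point is that $b$-count, $c$-count, and the $a$-block lengths are \emph{rigidly coupled} by the membership condition, but a single pump of bounded width can only perturb one of them, so no pump can preserve all constraints simultaneously.

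The main obstacle I anticipate is that $S'_w$ is not given explicitly: we only know it is non-empty for some $w$, and the particular pairs $(i,j)$ are adversarial. So I cannot simply declare which witness word lies in the language; I must choose $w$ so that \emph{whatever} non-empty $S'_w$ the adversary provides, I still get a pumpable witness. I would handle this exactly as in \cref{lem:main}: restrict attention to a $w$ with all block lengths distinct (so $S'_w$ is guaranteed non-empty), and engineer those lengths to be sufficiently spread out that any valid pair $(i,j)\in S'_w$ forces $b^i c^{i-j}$ to match block lengths that are far apart, ensuring that bounded-width pumping cannot land on the right values. Because the argument must be uniform over all admissible $S'$, I would phrase the contradiction as: pumping produces a word $z'$ whose $\#$-count is still $k+1$ but whose $b$- or $c$-count no longer corresponds to any pair arising from any $a$-block lengths realizable after the pump, hence $z'\notin L_{block-S'}$ for \emph{every} choice of $S'$ consistent with the hypothesis. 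This closes the argument and yields $L_{block-S'}\notin\mathsf{CFL}$.
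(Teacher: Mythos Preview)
Your overall plan---a pumping argument with a case split on where $v,y$ lie---is the right one, and you correctly identify the real difficulty: the sets $S'_w$ are adversarial and change when $w$ changes. However, your case~(ii) does not close. Suppose the plain pumping lemma hands you $v,y$ entirely inside a single $a$-block whose length is \emph{neither} $i$ nor $j$. Then pumping changes $w$ to some $w'$ in which the blocks of lengths $i$ and $j$ are still present, while the $b$- and $c$-counts are untouched. Your concluding sentence (``the $b$- or $c$-count no longer corresponds to any pair arising from any $a$-block lengths realizable after the pump'') is simply false here: $i$ and $j$ are still realized as block lengths of $w'$, so nothing structural forbids $(i,j)\in S'_{w'}$. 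Since $S'$ is fixed but unknown to you, you cannot exclude that the pumped word remains in $L_{block-S'}$, and the contradiction does not materialise. The ``well-separated block lengths'' idea does not help in this case, because the $b$- and $c$-counts are not moving at all.

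The paper sidesteps this case entirely by committing to Ogden's lemma and \emph{marking all the $b$'s}. This guarantees that at least one of $v,y$ contains a $b$, so the pump always touches the $b$-block and your case~(ii) never occurs. The paper then only has two cases to handle: either $v,y$ both sit in the $b^i c^{j-i}$ suffix, or one sits in an $a$-block and the other in the $b$-block. In both cases one pumps with a large exponent so that the new $b$-count exceeds every unpumped block length; then either no block length matches the $b$-count at all, or the only candidate is the single pumped $a$-block, in which case there is no strictly larger block left to serve as $j'$. Either way the word falls out of the language regardless of what $S'$ is. Incidentally, the paper takes consecutive block lengths $p,p+1,\ldots,p+k$ rather than a spread-out progression; once the $b$'s are marked, the spacing is irrelevant and you only need the lengths to be distinct and at least $p$.
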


The proof, detailed in the appendix, consists of a pumping argument.

\begin{theorem}
\(\text{Expl}\text{-}\text{CFL} \subsetneq \text{CFL}\)
\end{theorem}

\begin{proof}
Follows directly from  ~\cref{L-block-inCFL} and ~\cref{L-block-notexp}: we have exhibited a context-free language \( L_{block} \) which is not in \( \mathsf{Expl}_k\text{-}\mathsf{CFL} \), for any \( k \), establishing the strict inclusion. 
\end{proof}

\section{Parametrized Explorability and Its Expressiveness}

In the previous section, we established the existence of an expressiveness gap across different fixed levels of exploration. However, in that analysis, the exploration was treated as a constant, independent of the input. In this section, we investigate the consequences of parameterising the level of exploration as a function of the input length, as characterised by the parameterised explorability game. This allows us to study how expressiveness scales when the exploration level is allowed to grow with the size of the input. 

We are led to the following conjecture, which captures the hierarchy induced by varying exploration bounds:
\[\mathsf{Expl}_{O(1)}\text{-}\mathsf{CFL} \subsetneq \mathsf{Expl}_{O(n)}\text{-}\mathsf{CFL} \subsetneq \mathsf{Expl}_{exp}\text{-}\mathsf{CFL} = \mathsf{CFL}\] 


Here, we write $exp$ for the set of exponential functions, that is, in $O(2^{p(n)})$ for some polynomial $p$. Intuitively $\mathsf{Expl}_{O(f(n))}\text{-}\mathsf{CFL}$ is the class of languages recognised by a pushdown automaton that is allowed to have at most $O(f(n))$ tokens for \emph{exploration}, where $n$ is the length of the input. Observe that $\mathsf{Expl}_{O(1)}\text{-}\mathsf{CFL}$ is precisely the the class $\mathsf{Expl}$-CFL discussed so far. 



The conjecture above posits a strict hierarchy of language classes recognized by explorable pushdown automata under increasing exploration bounds. At the lower end, constant-bounded exploration (\( O(1) \)) is provably limited in expressiveness, while allowing exploration to grow linearly with input length (\( O(n) \)) yields strictly greater power. Ultimately, permitting exponential exploration suffices to recover the full class of context-free languages.

\begin{lemma}
    {\(\mathsf{Expl}_{O(1)}\text{-}\mathsf{CFL} \subsetneq \mathsf{Expl}_{O(n)}\text{-}\mathsf{CFL}\)}

\end{lemma}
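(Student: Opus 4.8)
The inclusion $\mathsf{Expl}_{O(1)}\text{-}\mathsf{CFL} \subseteq \mathsf{Expl}_{O(n)}\text{-}\mathsf{CFL}$ is immediate, since a constant number of tokens is in particular $O(n)$ (formally, the Determiner strategy witnessing $k$-explorability extends to a strategy in the $f(n)$-parameterized game for any $f(n) \geq k$, simply by leaving surplus tokens idle). So the whole content of the lemma is the \emph{strictness}: I must exhibit a language that is recognised by some $O(n)$-explorable PDA but by no $O(1)$-explorable PDA, i.e.\ a language in $\mathsf{CFL} \setminus \mathsf{Expl}\text{-}\mathsf{CFL}$ that nonetheless becomes explorable once linearly many tokens are allowed.

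The plan is to reuse the separating language from the previous section. Recall $L_{block} := \{ (a^* \#)^* b^n \mid n = \text{length of some } a\text{-block} \}$, which by \cref{L-block-inCFL} is context-free and by \cref{L-block-notexp} lies outside $\mathsf{Expl}_k\text{-}\mathsf{CFL}$ for every constant $k$ — hence outside $\mathsf{Expl}_{O(1)}\text{-}\mathsf{CFL}$. It therefore suffices to show $L_{block} \in \mathsf{Expl}_{O(n)}\text{-}\mathsf{CFL}$. First I would observe that the obstruction to constant explorability was precisely that the number of $a$-blocks is unbounded: each token can commit to guessing (and storing on its stack) the length of exactly one $a$-block to compare against the final $b$-block, and a word of length $n$ contains at most $n$ distinct $a$-blocks. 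The key step is to design the Determiner strategy in the $f(n)$-parameterized game with $f(n) = n$ (or any linear bound): once Spoiler announces the maximal word length $n$, Determiner receives $n$ tokens, and she dedicates one token to each successive $a$-block as it is read, using that token to push and then match that block's length against the trailing $b$'s. Since any word in $L_{block}$ of length at most $n$ has at most $n$ $a$-blocks, one token per block suffices, and the token assigned to the matching block accepts. This gives a winning Determiner strategy, establishing $L_{block} \in \mathsf{Expl}_{O(n)}\text{-}\mathsf{CFL}$.

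The main obstacle — and the point requiring the most care — is checking that a \emph{single fixed} $O(n)$-explorable PDA witnesses membership, rather than a family of PDAs depending on $n$. In the parameterized game the PDA $P$ is fixed once and for all, and only the token budget $f(n)$ varies with the length $n$ that Spoiler announces; so I must verify that the standard nondeterministic PDA for $L_{block}$ (the one from \cref{L-block-inCFL}, which nondeterministically guesses which $a$-block to store) is $O(n)$-explorable. The argument is that Determiner, with $n$ tokens, can run this PDA along $n$ branches that guess the first, second, \dots, $n$th $a$-block respectively; since there are at most $n$ blocks, every possible "correct guess" is covered by some token, so whenever the input is in $L_{block}$ some token follows an accepting run. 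I would also remark that the same reasoning shows the budget $n$ is what forces the gap to close: the lower-bound argument of \cref{L-block-notexp} breaks exactly when the token count is allowed to exceed the number of $a$-blocks, which is why no constant suffices but a linear function does.
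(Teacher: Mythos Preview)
Your proposal is correct and follows essentially the same approach as the paper: reuse $L_{block}$, invoke \cref{L-block-notexp} to exclude it from $\mathsf{Expl}_{O(1)}\text{-}\mathsf{CFL}$, and argue that the standard nondeterministic PDA for $L_{block}$ is $O(n)$-explorable because a word of length $n$ has at most $n$ $a$-blocks, so one token per block suffices. Your additional care in noting that the PDA is fixed while only the token budget scales is a welcome clarification that the paper leaves implicit.
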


\begin{proof} In the previous section, we saw that the language \[L_{block} := \left\{ (a^* \#)^* b^n \mid n = \text{length of some } a\text{-block} \right\}\]
cannot be recognized by any \( k \)-explorable pushdown automaton for any \( k \in \mathbb{N} \) (~\cref{L-block-notexp}). Consequently, we conclude that \( L_{block}\notin \mathsf{Expl}_{O(1)}\text{-}\mathsf{CFL} \). 

However, it is straightforward to observe that \( L_{block} \in \mathsf{Expl}_{O(n)}\text{-}\mathsf{CFL} \). Indeed, we can construst a PDA that nondeterministically chooses an \(a\)-block and compares it with the \(b\)-block. Since the total number of such \( a \)-blocks is bounded by the length of the input word, it suffices to use \( O(n) \) runs to cover all possible cases. Within each run, the automaton can compare the length of an \( a \)-block with the number of \( b \)'s at the end of the input. If a match is found, the input is accepted. Hence, the inclusion $\mathsf{Expl}_{O(1)}\text{-}\mathsf{CFL} \subsetneq \mathsf{Expl}_{O(n)}\text{-}\mathsf{CFL}$ is strict.

\end{proof}

\begin{lemma}
$\mathsf{Expl}_{\exp}\text{-}\mathsf{CFL} = \mathsf{CFL}$
\end{lemma}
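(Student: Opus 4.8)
The plan is to prove the two inclusions separately. The inclusion $\mathsf{Expl}_{\exp}\text{-}\mathsf{CFL} \subseteq \mathsf{CFL}$ is immediate: an $f(n)$-explorable PDA is in particular a PDA, and the language it recognises is some $L(P)$, which is context-free by definition, so nothing beyond unwinding the definitions is needed. The content of the lemma is therefore the reverse inclusion $\mathsf{CFL} \subseteq \mathsf{Expl}_{\exp}\text{-}\mathsf{CFL}$: I must show that every context-free language $L$ admits a PDA that is $f(n)$-explorable for some $f \in exp$.

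First I would fix a PDA $P$ for $L$ in a normal form that bounds the length of its runs. Concretely, I would start from a grammar in Greibach normal form and take the standard top-down PDA, which consumes exactly one input symbol per transition, adding at most a bounded number of $\varepsilon$-transitions in order to convert empty-stack acceptance into the final-state acceptance used in this paper. The effect is that $P$ is quasi-realtime: there is a constant $c$ such that every run of $P$ on an input of length $n$ consists of at most $cn$ transitions. I would also complete $P$ by adding a single non-accepting sink mode that absorbs any otherwise-undefined move, so that from every configuration and every input symbol at least one transition is available; this does not change $L(P)$ and keeps the per-step branching bounded by a constant $D := |\delta|$.

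The key quantitative observation is then that the number of distinct runs of $P$ on a word of length at most $n$ is at most $D^{cn} = 2^{O(n)}$, since each of the at most $cn$ steps offers at most $D$ choices. I would set $f(n) := D^{cn}$, which lies in $exp$. Determiner's strategy in the $f(n)$-parameterised game is then to play breadth-first over the entire computation tree: once Spoiler announces $n$, she assigns one token to each of the at most $f(n)$ distinct sequences of nondeterministic choices of $P$, and has each token follow its own sequence step by step, which is always possible because completeness guarantees a valid move at every step. Since the tokens collectively realise every run of $P$, for each prefix $a_1\cdots a_i \in L$ the accepting run witnessing membership is followed by some token, which is therefore in an accepting configuration at step $i$; hence Determiner wins every play, and $P$ is $f(n)$-explorable with $f \in exp$.

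The main obstacle is the normal-form step. The naive ``one token per run'' idea breaks if $P$ has unbounded $\varepsilon$-activity, since then a single input symbol could be separated from the next by unboundedly many (and unboundedly branching) $\varepsilon$-moves, making the number of runs infinite rather than exponential; note that the cruder ``one token per reachable configuration'' strategy runs into routing issues, as a single token cannot simultaneously supply several successors of a configuration with a common predecessor, which is precisely why I favour the run-indexed strategy whose tokens move independently. Controlling the $\varepsilon$-activity via a quasi-realtime normal form, while preserving both the language and final-state acceptance, is the delicate part; once the number of steps is linear in $n$ and the branching is constant, the exponential token bound and the correctness of the breadth-first strategy follow routinely.
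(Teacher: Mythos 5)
Your proposal is correct and follows essentially the same route as the paper: reduce to a PDA with linearly bounded run length and constant per-step branching (the paper invokes $\varepsilon$-transition elimination, you derive it via Greibach normal form), count at most exponentially many runs, and have Determiner dedicate one token to each. Your extra care about completing the automaton with a sink and converting the acceptance mode only fills in details the paper leaves implicit.
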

\begin{proof}

Let $L \in \mathsf{CFL}$. Then there exists a PDA $M$ that accepts $L$. It is well known that for any PDA, there exists an equivalent PDA that accepts the same language but does not use $\varepsilon$-transitions\cite{Rozenberg2012-zi}, so w.l.o.g. we assume $M$ to have no $\varepsilon$-transitions.

Let $m$ be the maximum number of nondeterministic transitions possible from any state in $M'$. Consider an input word $w$ of length $n$. Since the machine can make at most $m$ choices at each step, the number of possible runs of $M'$ on input $w$ is at most $m^n$. 

Therefore, the total number of possible runs grows at most exponentially with the length of the input. Therefore exponentially many tokens are enough to explore each of the possible runs.  Hence, $\mathsf{CFL} \subseteq \mathsf{Expl}_{\exp}\text{-}\mathsf{CFL}.$

The reverse inclusion is trivial. Therefore,
$\mathsf{CFL} = \mathsf{Expl}_{\exp}\text{-}\mathsf{CFL}.$
\end{proof}

\section{Succinctness}

We now shift our focus to the succinctness of pushdown automata. Recall that the size of a PDA is the product \( |Q| \cdot |\Gamma| \), where \( Q \) is the set of states and \( \Gamma \) is the stack alphabet. We argue that significant succinctness gaps exist between different levels of exploration.  We build on several  results from the work of Guha et al.~\cite{Guha2021-sv}. 

We demonstrate that \emph{explorable} PDAs are not only more expressive than history-deterministic PDAs (HD-PDAs) and deterministic PDAs (DPDAs), but also more succinct. Furthermore, we show that fixing the degree of exploration yields a \emph{double-exponential gap} in succinctness when compared to unrestricted explorable PDAs.

\subsection{Explorable vs. HD-PDA}

Guha et al \cite{Guha2021-sv} showed that a nondeterministic pushdown automaton (PDA) can recognize the language $L_n = (0+1)^* 1 (0+1)^{n-1}$ using only \( O(\log(n)) \) space, while any history-deterministic PDA (HD-PDA) requires space exponential in \( n \). This establishes a double-exponential gap in succinctness between PDAs and HD-PDAs. 

In this work, we demonstrate that the same double exponential gap occurs already between HD-PDAs and explorable PDAs. Specifically, we show that an  explorable PDA can recognize \( L_n \) using only \( O(\log(n)) \) space, highlighting the succinctness advantage of explorability over history-determinism.

\begin{lemma}\label{cl:expPDA_log_ln}
    There exists an explorable PDA of size $O(\log n)$ recognising $Ln$.
\end{lemma}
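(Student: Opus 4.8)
The plan is to construct, for each fixed $n$, a PDA $P_n$ of size $O(\log n)$ whose only nondeterminism is an initial choice of a \emph{phase} $r\in\{1,\dots,n\}$, after which every run is deterministic. Writing the input as $a_1a_2\cdots$, the run in phase $r$ records the bits read at the \emph{mark steps} $r, r+n, r+2n,\dots$ and, exactly $n-1$ steps after each mark step, enters an accepting state if and only if the bit recorded there was a $1$; it then continues and re-marks at the next step of the same progression. Concretely the run keeps a counter cycling through $0,\dots,n-1$, stored in binary on the stack using $O(1)$ stack symbols; the parameter $n$ enters only through the gadget that loads and tests the fixed binary constant $n$, which costs $O(\log n)$ states. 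Thus $|Q|=O(\log n)$ and $|\Gamma|=O(1)$, so $P_n$ has size $O(\log n)$.

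First I would verify that $L(P_n)=L_n$. The \emph{check steps} of the phase-$r$ run are $\{r+n-1+tn : t\ge 0\}$, which lie in residue class $r-1$ modulo $n$; as $r$ ranges over $\{1,\dots,n\}$ these progressions partition all residues. Hence for any word of length $m\ge n$ there is exactly one phase whose run has a check step at step $m$, and the bit tested there is exactly $a_{m-n+1}$, so $P_n$ accepts this word iff $a_{m-n+1}=1$, i.e. iff it lies in $L_n$. No other phase can accept at step $m$, since distinct phases place their check steps in distinct residue classes, and words of length $<n$ are rejected because no check step occurs before step $n$.

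Next I would establish explorability by giving Determiner a strategy with $k=n$ tokens, committing one token to each phase $r\in\{1,\dots,n\}$ at the very start; the phases are reached via the initial $\varepsilon$-transitions, so the tokens diverge immediately and thereafter run deterministically. For any prefix $a_1\cdots a_j\in L_n$ we have $a_{j-n+1}=1$, and the unique phase $r$ with $r\equiv j+1\pmod n$ has a check step at step $j$ testing the bit $a_{j-n+1}=1$; the token carrying that phase is therefore in an accepting state after reading exactly $a_1\cdots a_j$. As Determiner owns a token for every phase, some token always accepts, so $P_n$ is $n$-explorable and hence explorable.

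The crux, and the reason the textbook succinct nondeterministic PDA does not suffice, is keeping the token count bounded while the input is unbounded. That PDA guesses a single marked $1$, counts $n-1$ symbols forward, and accepts once; on an input like $1^{N}$ with $N\gg n$, Determiner would need a separate token for each of the $\Theta(N)$ positions that must be accepting, which is not bounded. Periodic re-marking removes this obstacle: for a fixed phase the counting windows $[r,r+n-1],[r+n,r+2n-1],\dots$ are consecutive and pairwise disjoint, so one deterministic run can reuse its single counter across the whole arithmetic progression of marks. This collapses the requirement to one token per residue class, i.e. exactly $n$ tokens, which is precisely what allows $L_n$ to be recognised by an explorable PDA of size $O(\log n)$ even though every history-deterministic PDA for it must be exponentially large.
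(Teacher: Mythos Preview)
Your proposal is correct and takes essentially the same approach as the paper: both use a binary counter on the stack for $O(\log n)$ size, assign one token to each residue class modulo $n$ for $n$-explorability, and rely on periodic re-marking so that a single token handles all accepting positions in its class. The only cosmetic difference is that the paper realises the phase choice as a repeated binary ``skip or mark'' decision at an initial state (with an $\varepsilon$-loop back after each count), rather than as an explicit up-front selection of $r$; your final paragraph correctly isolates the reason this re-marking is the crux.
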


\begin{proof}

 Guha et all.~\cite{Guha2021-sv} described a PDA \(P_n\) that recognises \(L_n\) in \(O(\log(n))\) space.  We adjust their construction to get \( \mathcal{P'}_n \) that is also \(n\)-explorable. The PDA \( \mathcal{P}_n \) nondeterministically guesses the \(n\)th bit from the end of the input, checks that it is a \(1\) or a \(0\) and stored that in the state space, then switches to a counting gadget that verifies the input ends in exactly \(n\) steps, as follows:
\begin{enumerate}
    \item It pushes the binary representation of \( n - 2 \) onto the stack. For example, if \( n = 8 \), then \( 110 \) is pushed onto the stack with \( 0 \) at the top. Note that \( \log(n - 2) \) states suffice for pushing the binary representation of \( n - 2 \). If \( n = 1 \), then instead of pushing anything onto the stack, the automaton directly moves to a final state without any enabled transitions.

    \item Then \( \mathcal{P}_n \) moves to a state that attempts to decrement the counter by one for each successive input letter, as follows: when an input letter is processed, it pops \(0\)s until a \(1\) is at the top of the stack, say \(m\) \(0\)s. Then, it replaces the \(1\) with a \(0\), and finally pushes \(m\) \(1\)s back onto the stack before processing the next letter. If the stack empties before a \(1\) is found, then the counter value is \(0\), and the automaton moves to an accepting state or dummy state depending on the initial 1-check. Note that \( O(\log n) \) states again suffice for this step.

    \item To make the automaton \(P'_n\), we keep the description of \(P_n\) and add an \(\varepsilon\)-transition from the final state and dummy state (as deribed above) back to the initial state. 

\end{enumerate}

Now, consider the \(n\) tokens, numbered \(0, 1, \dots, n - 1\), and an input word \(w = w_1 w_2 \dots w_m\). Each token \((i)\) starts the logarithmic counting at input positions \(j\) such that \(j \bmod n = i\). Therefore, if the input contains a \(1\) at the \(n\)th position from the end, then there exists at least one token that reaches the final state of the PDA. Furthermore, if an additional input symbol is read after reaching the final state, the \(\varepsilon\)-transition takes the token to the initial state, and the log-space counter is reinitialized. This allows the token to repeat the counting procedure from the next position.

Thus, \( \mathcal{P'}_n \) has \( O(\log n) \) states and is n-explorable. Note that for all \( n \), \( \mathcal{P'}_n \) uses only three stack symbols: \( 0 \), \( 1 \), and \( \perp \). Altogether, the size of \( \mathcal{P'}_n \) is \( O(\log n) \), and \( \mathcal{P}'_n \) recognises \( L_n \). 
\end{proof}

We can now state the following succinctness result.

\begin{theorem}
$\mathsf{Expl}\text{-}\mathsf{CFL}$ can be 
double-exponentially more succinct than \(\mathsf{HD}\text{-}\mathsf{PDA}\).
\end{theorem}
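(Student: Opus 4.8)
The plan is to combine the upper bound just established in \cref{cl:expPDA_log_ln} with the matching lower bound of Guha et al.~\cite{Guha2021-sv}, using the same witness family $L_n = (0+1)^* 1 (0+1)^{n-1}$ as the separating family. On one side there is an explorable PDA of size $O(\log n)$ for $L_n$; on the other side, every history-deterministic PDA for $L_n$ must be large. So no new construction is needed — the theorem is a packaging of two facts plus one size calculation.

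First I would invoke \cref{cl:expPDA_log_ln} to fix, for each $n$, an explorable PDA $\mathcal{P}'_n$ of size $s_n \in O(\log n)$ with $L(\mathcal{P}'_n) = L_n$. Next I would recall the HD-PDA lower bound from \cite{Guha2021-sv}: any history-deterministic PDA recognising $L_n$ has size at least $2^{\Omega(n)}$. This is exactly the bound that underpins their double-exponential PDA-versus-HD-PDA gap, and it is stated for the same product size measure $|Q| \cdot |\Gamma|$ used throughout this paper, so it can be transported directly.

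The final and only nontrivial step is the arithmetic that converts an ``$O(\log n)$ versus $2^{\Omega(n)}$'' comparison into a statement that is doubly exponential in the size of the small automaton. Writing $s_n = \Theta(\log n)$ for the size of $\mathcal{P}'_n$, we have $n = 2^{\Theta(s_n)}$, and hence any HD-PDA recognising $L_n$ has size at least $2^{\Omega(n)} = 2^{2^{\Omega(s_n)}}$, which is doubly exponential in $s_n$. This gives the claimed double-exponential succinctness advantage of explorable PDAs over HD-PDAs.

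The main obstacle is not in this theorem's own argument, since the heavy lifting already happens in \cref{cl:expPDA_log_ln}; rather, it lies in composing the two external-versus-internal bounds cleanly. I would be careful to confirm that the exponential HD-PDA lower bound of \cite{Guha2021-sv} applies verbatim to $L_n$ under the measure $|Q| \cdot |\Gamma|$, because the double-exponential conclusion degrades to a merely exponential one if either the upper bound is only $O(n)$ rather than $O(\log n)$, or the cited lower bound is only polynomial in $n$. Both the logarithmic upper bound (coming from storing the counter in binary on the stack) and the exponential lower bound are thus load-bearing, and verifying that they compose to a genuine double exponential is where the care should go.
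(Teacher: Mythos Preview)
Your proposal is correct and matches the paper's approach exactly: the theorem is stated immediately after \cref{cl:expPDA_log_ln} with no separate proof, since it follows by combining that $O(\log n)$ upper bound with the exponential HD-PDA lower bound for $L_n$ from Guha et al.~\cite{Guha2021-sv}, and your size arithmetic is precisely the intended packaging.
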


\subsection{\texorpdfstring{$k$}{k}-Explorable PDA vs. Explorable PDA}

We now compare the succinctness gap between PDA of bounded and unbounded explorability, using the following family of languages.
\[L_{mod \ n } := (0+1)^{<n}(1(0+1)^{n-1})^* \]

That is, a word belongs to \( L_{mod \ n} \) if every position that is a multiple of \(n\), counting from the end, contains a 1.


\begin{lemma}
    Any \( k \)-explorable pushdown automaton (PDA) that accepts the language $L_{mod \ n}$ must have size at least $\frac{\sqrt[k]{2^n}}{n}$
\end{lemma}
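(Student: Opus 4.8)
The plan is to lower-bound the size of any $k$-explorable PDA for $L_{mod\ n}$ by a counting/fooling argument on the configurations reachable along the $k$ concurrent runs. First I would fix a winning Determiner strategy in the $k$-explorability game and consider inputs of the form $u b$ where $u$ ranges over the $2^n$ distinct words in $(0+1)^n$ (a full block) and the single trailing letter is irrelevant; the point is that membership of a continuation depends on the residue pattern of $1$'s at positions that are multiples of $n$ from the end, so the automaton must, in effect, remember enough about the last $n$ symbols to certify that every position $\equiv 0 \pmod n$ from the end carries a $1$. The key observation is that with $k$ tokens, the combined memory available after reading a prefix is the $k$-tuple of configurations $(C^1,\dots,C^k)$, and the stack heights are bounded by $O(\log n)$ worth of information only if the automaton is small — so I would argue that if the PDA has size $s$, then the number of distinguishable $k$-tuples of configurations with bounded stack is at most roughly $s^{O(k)}$ times the stack contribution, and this must be at least the number of distinct ``acceptance obligations'' the Determiner must track, which I claim is $2^{\Omega(n)}$.

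The technical heart is a fooling-set argument: I would produce a family of $2^n$ words $\{u_x\}_{x\in\{0,1\}^n}$ such that for distinct $x \neq y$ there is a common suffix $v$ making exactly one of $u_x v$, $u_y v$ lie in $L_{mod\ n}$, forcing the $k$-tuple of configurations reached after $u_x$ and after $u_y$ to differ. Since each token carries a state from $Q$ and a stack whose relevant top content is bounded, the number of distinguishable configurations per token is at most $|Q|\cdot|\Gamma|^{h}$ for the relevant stack height $h$; taking the $k$-fold product and comparing against $2^n$ gives, after taking $k$-th roots, a bound of the shape $\frac{\sqrt[k]{2^n}}{n}$, where the division by $n$ absorbs the polynomial overhead from stack height and the indexing of which of the $O(n)$ positions within a block each token is tracking. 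Concretely, I would show that one token can only monitor the residue class of positions it has committed to, so the $k$ tokens collectively pin down at most $k$ of the $n$ residue classes, and distinguishing all $2^n$ obligation-patterns across the $n$ residues forces $(|Q||\Gamma|)^k \ge 2^n / \mathrm{poly}(n)$.

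The main obstacle I anticipate is controlling the stack. Unlike finite automata, a PDA token's memory is not bounded by its size, so a naive configuration-counting argument fails — a single token could in principle store an exponential amount of information on its stack. I would address this by arguing that for the language $L_{mod\ n}$ the \emph{useful} information at any prefix is precisely the last $n$ input symbols (equivalently the pattern of $1$'s in each of the $n$ residue classes near the end), and that a token committed to verifying a particular residue class can be forced, via the fooling suffix $v$, to reveal a discrepancy that depends only on its control state and a $O(\log n)$-bounded portion of its stack. Making this precise — bounding the relevant stack height by $O(\log n)$ and hence the per-token configuration count by $(|Q||\Gamma|)^{O(1)}$ times a polynomial in $n$ — is the delicate step, and it is exactly what yields the denominator $n$ and the clean $k$-th-root form $\sqrt[k]{2^n}/n$ of the bound.
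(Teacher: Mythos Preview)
Your proposal has a genuine gap at exactly the point you flag as ``delicate'': bounding the relevant stack height. You correctly identify that configuration-counting fails for PDAs because the stack is unbounded, but your proposed fix --- that only an $O(\log n)$ portion of the stack is ``relevant'' because the language semantically only depends on the last $n$ letters --- does not work. The semantics of $L_{mod\ n}$ constrain what the PDA must \emph{compute}, not how it computes it; nothing prevents a small PDA from pushing deeply onto its stack and later popping arbitrarily far down while processing your fooling suffix $v$. Your claim that the discrepancy ``depends only on its control state and a $O(\log n)$-bounded portion of its stack'' is precisely what needs to be proved, and there is no evident mechanism to force it.

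The paper sidesteps this obstacle entirely with a different idea. Rather than bounding stack height, it runs the $k$ tokens on the \emph{infinite periodic} input $w^\omega$ for each $w \in \{0,1\}^n$, and uses the notion of a \emph{step}: a position in a run after which the stack height never drops below its current value. At a step, the future of the run is determined by the \emph{mode} (state and top stack symbol) alone, because the deeper stack is never touched again. Any infinite run has infinitely many steps, so by pigeonhole some \emph{indexed mode} (mode plus position $\bmod\ n$) recurs infinitely often in each of the $k$ runs. There are at most $(|Q|\cdot|\Gamma|\cdot n)^k$ such $k$-tuples of indexed modes; if this is less than $2^n$, two distinct $w_1,w_2$ share the same recurring $k$-tuple, and one can splice a suffix of one $k$-run onto the other at matching steps to produce an accepting run on a word outside $L_{mod\ n}$. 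This yields $(|Q||\Gamma|n)^k \ge 2^n$, i.e.\ $|Q||\Gamma| \ge \sqrt[k]{2^n}/n$. The factor $n$ arises from the index $\bmod\ n$, not from any stack-height bound. Your fooling-set intuition is sound, but the missing ingredient is the step/mode machinery that makes the stack irrelevant at the comparison points.
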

\begin{proof}
    We define the \emph{stack height} of a configuration $c = (q, \gamma)$ as $\mathrm{sh}(c) = |\gamma| - 1$, and we define \emph{steps} of a run as follows. Consider a run $c_0 \tau_0 c_1 \tau_1 \cdots c_{n-1} \tau_{n-1} c_n$. A position $s \in \{0, \ldots, n\}$ is a \emph{step} if for all $s' \geq s$, we have that $\mathrm{sh}(c_{s'}) \geq \mathrm{sh}(c_s)$, that is, the stack height is always at least $\mathrm{sh}(c_s)$ after position $s$. Any infinite run of a PDA has infinitely many steps. We have the following observation: 

    \begin{remark} 
    If two runs of a PDA have steps \( s_0 \) and \( s_1 \), respectively, with the same
    mode, then the suffix of the run following the step \( s_0 \) can replace the suffix of the other run
    following the step \( s_1 \), and the resulting run is a valid run of the PDA.
    \end{remark}

    Consider an infinite run of a pushdown automaton (PDA) \(P\) on some input word \(w^{\omega}\) where \(w\) is of length \(n\). Each position of the run is characterised by an \textit{indexed mode}, which is a tuple consisting of the current state, the top symbol of the stack, and the current position in \(w\) (an integer in \([0, n-1]\)). Since the set of states \(Q\), the stack alphabet \(\Gamma\), and the input positions \([0, n]\) are all finite, the total number of distinct indexed modes is bounded by \(|Q| \cdot |\Gamma| \cdot n\).\
    
    Since an infinite run must contain infinitely many steps and there are only finitely many indexed modes, by the pigeonhole principle, some indexed mode must appear infinitely often as a step. That is, there exists a mode \(m = (q, \gamma, i)\), where \(q \in Q\), \(\gamma \in \Gamma\), and \(i \in \{0, 1, \ldots, n\}\), such that \(m\) occurs at infinitely many points in the run as a step.
        
    Now consider a \(k\)-explorable pushdown automaton. Determiner's strategy witnessing this induces a tuple of $k$ runs in parallel on a given input word. As in the previous argument, in each individual run, some indexed mode must appear infinitely often as a step. Since there are at most \(|Q| \cdot |\Gamma| \cdot n\) possible indexed modes for a single run, the number of possible combinations of indexed modes across all \(k\) runs is at most $(|Q| \cdot |\Gamma| \cdot n)^k.$

This gives an upper bound on the number of distinct indexed mode combinations that can occur as a step infinitely many times in a \(k\)-run of a $k$-explorable PDA.

Now, observe that there are \( 2^n \) distinct binary words of length \( n \). Suppose the number of distinct modes is strictly less than \( \frac{\sqrt[k]{2^n}}{n} \). Then, by the pigeonhole principle, there must exist at least two distinct words, say \( w_1 \) and \( w_2 \), such that the infinite $k$-run ($k$-tuple of infinite runs) \( \overline\rho_1 \) and \( \overline\rho_2 \), induced by the \( k \)-explorable PDA on inputs \( w_1^\omega \) and \( w_2^\omega \), respectively, contain infinitely many steps at the same index modulo \( n \), and with identical step modes. Let these step modes be \( m_1, m_2, \dots, m_k \), one for each of the \( k \) runs for both \( w_1 \) and \( w_2 \), occurring at positions \( s_1, s_2, \dots, s_k \) in \( \overline\rho_1 \), and at corresponding positions \( s'_1, s'_2, \dots, s'_k \) in \( \overline\rho_2 \), such that for each \( \alpha \in [k] \), we have
\[
 s_\alpha \bmod n = s'_\alpha \bmod n \ \And \ |s_\alpha| > n \ \And \ |s'_\alpha| > n
\]

Since \( w_1 \) and \( w_2 \) are distinct words, they differ at some position, say \( j \). WLOG, assume that \( w_1[j] = 1 \) while \( w_2[j] = 0 \). For \(w_1\) at every instance where the input length modulo the word length ($n$) is \( j - 1 \), one of runs in  \(\overline\rho_1\) must reach an accepting state (since the bit at position \( j \) in \( w_1 \) is \( 1 \)). However, in the run \( \rho_2 \), at every such instance (i.e., when the position modulo \( n \) is \( j - 1 \)), no accepting state can be reached in \( \overline\rho_2 \), since the corresponding bit in \( w_2 \) is \( 0 \).

Let \(s = max(s_1, \dots, s_k)\), and consider a position \(p\) where the modulo \(n\) is \( j - 1 \) and \(p>s\). At position \(p\),  \(\rho_1\) should have an accepting state in one of the explorable runs as explained before. 
Now, the contradiction is as follows: WLOG, assume that the accepting state is at the first run in \( \overline\rho_1 \). This means that the suffix of first run \( \overline\rho_2 \) starting from position \(s'_1 + 1\) can be replaced with the suffix of first run in \( \overline\rho_1 \) starting from position \(s_1 + 1\) and this will yield a valid run (by remark 4.1.1) and accepting run (see figure 1). After replacement, the resultant run is on a word which is some concatenation of substrings of  \(w^*_2 \text{ and } w^*_1\); however since \(s_1 \text{ and } s'_1\) share the same index, the index of the final state also remains the same. As \(s'_1 > n \) the initial occurrences of \(w_2[j]\) will lie at some $(0 \ mod \ n)$th bit from the end; therefore, the resultant word is not in \(L_{mod \ n}\). However, after the replacement, P accept the word, leading to a contradiction. 
\end{proof}

\begin{theorem}
\(\mathsf{Expl}\text{-}\mathsf{CFL}\) can be double-exponentially 
more succinct than \(\mathsf{Expl}_k\text{-}\mathsf{CFL}\).
\end{theorem}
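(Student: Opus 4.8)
The plan is to exhibit the family $(L_{mod\ n})_{n\in\mathbb N}$ as the witnessing family: for each $n$ I will give an explorable PDA of size $O(\log n)$ recognising $L_{mod\ n}$, while the preceding lemma (the $\sqrt[k]{2^n}/n$ lower bound) already guarantees that every $k$-explorable PDA recognising $L_{mod\ n}$ has size at least $2^{n/k}/n$. Since $L_{mod\ n}$ is regular it lies in both $\mathsf{Expl}\text{-}\mathsf{CFL}$ and $\mathsf{Expl}_k\text{-}\mathsf{CFL}$, so the comparison is meaningful; the point is that the small automaton is $n$-explorable (hence explorable), whereas when $k$ is held fixed as a constant and $n$ grows, the bound $2^{n/k}/n$ forces a blow-up. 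Writing $s$ for the size of the explorable automaton, $s=O(\log n)$ gives $n=2^{\Theta(s)}$, so the $k$-explorable lower bound becomes $2^{2^{\Theta(s)}/k}/2^{\Theta(s)}$; for fixed $k$ the factor $1/k$ in the inner exponent and the single-exponential denominator do not affect the growth, so this is doubly exponential in $s$, exactly the claimed gap.

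The only missing ingredient is the upper bound, which I would obtain by adapting the construction of \cref{cl:expPDA_log_ln} from $L_n$ to the periodic language $L_{mod\ n}$. Observe that a prefix $w$ of length $m$ lies in $L_{mod\ n}$ iff, setting $r=m\bmod n$, every position $\equiv r+1 \pmod n$ of $w$ (the first symbol of each length-$n$ block counted from the end) is a $1$. I would build a PDA $\mathcal Q_n$ whose $n$ tokens each fix a distinct residue class $r\in\{0,\dots,n-1\}$ through their initial $\varepsilon$-choice and then deterministically verify the corresponding condition: reusing the binary stack counter of \cref{cl:expPDA_log_ln} to track the current input position modulo $n$ in $O(\log n)$ states, token $r$ checks that the symbol at every position $\equiv r+1 \pmod n$ is a $1$ (diverting to a dead state otherwise) and occupies an accepting state exactly at the positions $\equiv r \pmod n$. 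As in \cref{cl:expPDA_log_ln}, an $\varepsilon$-transition back to the start lets each token re-verify at successive candidate endpoints. Then for any prefix length $m$ the token with $r=m\bmod n$ has an accepting run precisely when the prefix belongs to $L_{mod\ n}$, so Determiner wins the $n$-explorability game, and the machine uses only $O(\log n)$ states together with the three stack symbols $0,1,\perp$, giving size $O(\log n)$.

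The main obstacle is the correctness of this upper-bound construction rather than the arithmetic. Unlike $L_n$, membership in $L_{mod\ n}$ is a \emph{conjunction} of bit-conditions, so I must argue that a single token can certify its entire residue class within one left-to-right deterministic pass (carrying an ``all checks so far passed'' flag through the counting gadget), and that the off-by-one alignment between the block-start positions ($\equiv r+1$) and the accepting positions ($\equiv r$) stays consistent as the guessed endpoint advances. Once this alignment is verified, combining the $O(\log n)$ upper bound with the $2^{n/k}/n$ lower bound of the preceding lemma yields the double-exponential separation, completing the proof.
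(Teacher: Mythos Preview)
Your proposal is correct and follows essentially the same approach as the paper: combine the $\sqrt[k]{2^n}/n$ lower bound from the preceding lemma with an $O(\log n)$ upper bound for an $n$-explorable PDA recognising $L_{mod\ n}$, obtained by adapting the construction of \cref{cl:expPDA_log_ln}. The paper's own proof is in fact terser than yours---it simply asserts that the $O(\log n)$ construction carries over ``as in'' \cref{cl:expPDA_log_ln}---whereas you correctly isolate the one genuine adaptation needed, namely that each token must verify a \emph{conjunction} of bit-conditions (dying on a $0$ rather than merely recording it) and must align its accepting positions with its residue class; this is exactly the right point to flag.
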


\begin{proof}
    As in the ~\cref{cl:expPDA_log_ln}, an explorable PDA (more specifically an n-explorable PDA)  can recognise \( L_{mod \ n} \) using \( \log(n) \) space. In contrast, we showed that for each fixed $k$, any \( k \)-explorable PDA recognizing \( L_{mod \ n} \) must use at least \textbf{\(\frac{\sqrt[k]{2^n}}{n}\)} size. This establishes a double exponential gap in succinctness between fully explorable PDAs and their \( k \)-explorable counterparts.
\end{proof}

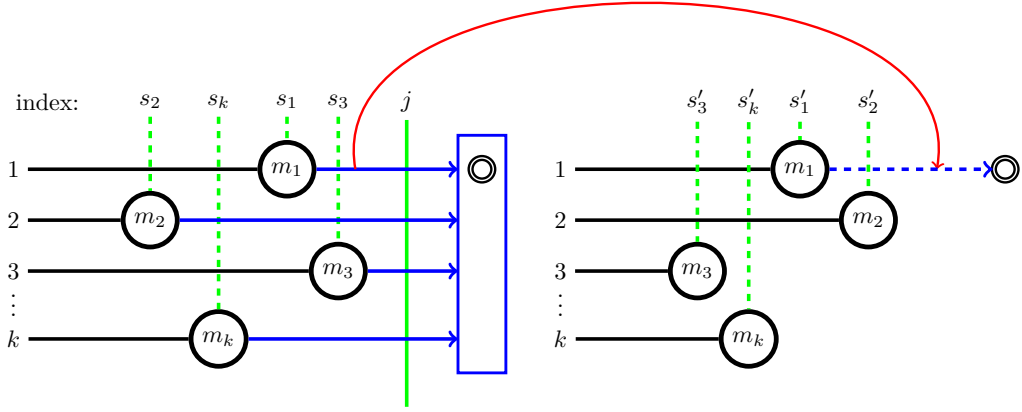
\begin{figure}
\centering
\scalebox{.9}{
\begin{tikzpicture}

    \tikzstyle{mode} = [circle, draw=black, fill=white, line width=2pt, inner sep=3pt]

    \node (index) at (0.5,3.5) {index: };

    \node[mode] (m1) at (4,2.5) {$m_1$}; \node (1) at (0, 2.5) {$1$};
    \node[mode] (m2) at (2,1.75) {$m_2$}; \node (2) at (0, 1.75) {$2$};
    \node[mode] (m3) at (4.75,1) {$m_3$}; \node (3) at (0, 1) {$3$};
    \node[mode] (m4) at (3,0) {$m_k$}; \node (k) at (0, 0) {$k$};
    \node (dots) at (0, 0.6) {$\vdots$};

    \node (i1) at (4,3.5) {$s_1$}; \draw[-, line width=1.5pt, dashed ,color =green] (i1) to (m1);
    \node (i2) at (2,3.5) {$s_2$}; \draw[-, line width=1.5pt, dashed ,color =green] (i2) to (m2);
    \node (i3) at (4.75,3.5) {$s_3$};\draw[-, line width=1.5pt, dashed , color =green] (i3) to (m3);
    \node (ik) at (3,3.5) {$s_k$};\draw[-, line width=1.5pt, dashed , color =green] (ik) to (m4);
    \node (j) at (5.75,3.5) {$j$};\draw[-, line width=1.5pt, color =green] (j) to (5.75,-1);

    \draw[blue, very thick] (6.5,-0.5) rectangle (7.2,3);
    
    \draw[->, line width=1.5pt, color=blue] (m1) to (6.5,2.5);
    \draw[->, line width=1.5pt, color=blue] (m2) to (6.5,1.75);
    \draw[->, line width=1.5pt, color=blue] (m3) to (6.5,1);
    \draw[->, line width=1.5pt, color=blue] (m4) to (6.5,0);

    \draw[-, line width=1.5pt] (1) to (m1);
    \draw[-, line width=1.5pt] (2) to (m2);
    \draw[-, line width=1.5pt] (3) to (m3);
    \draw[-, line width=1.5pt] (k) to (m4);

    \node[circle, style={draw,double}, line width=1pt] (circle) at (6.85, 2.5) {};

    \node[mode] (m11) at (11.5,2.5) {$m_1$}; \node (11) at (8, 2.5) {$1$};
    \node[mode] (m21) at (12.5,1.75) {$m_2$}; \node (21) at (8, 1.75) {$2$};
    \node[mode] (m31) at (10,1) {$m_3$}; \node (31) at (8, 1) {$3$};
    \node[mode] (m41) at (10.75,0) {$m_k$}; \node (k1) at (8, 0) {$k$};
    \node (dots) at (8, 0.6) {$\vdots$};

    \draw[-, line width=1.5pt] (11) to (m11);
    \draw[-, line width=1.5pt] (21) to (m21);
    \draw[-, line width=1.5pt] (31) to (m31);
    \draw[-, line width=1.5pt] (k1) to (m41);

    \draw[ -> , color =red , bend left=100, line width=1pt] (5,2.5) to (13.5,2.5);

    \node (i11) at (11.5,3.5) {$s'_1$}; \draw[-, line width=1.5pt, dashed , color =green] (i11) to (m11);
    \node (i21) at (12.5,3.5) {$s'_2$}; \draw[-, line width=1.5pt,dashed , color =green] (i21) to (m21);
    \node (i31) at (10,3.5) {$s'_3$};\draw[-, line width=1.5pt, dashed ,color =green] (i31) to (m31);
    \node (ik1) at (10.75,3.5) {$s'_k$};\draw[-, line width=1.5pt, dashed ,color =green] (ik1) to (m41);
    
    \node[circle, style={draw,double}, line width=1pt] (circle2) at (14.5, 2.5) {};
    \draw[->, dashed, line width=1.5pt, color= blue] (m11) to (circle2);
    
\end{tikzpicture}
}

\caption{The figure illustrates two \(k\)-runs of a \( k \)-explorable PDA \(P\). On the left, we depict the \(k\)-run \( \overline\rho_1 \), where at a position \(p\) an accepting state is reached along the run. On the right, we show  \( \overline\rho_2 \), in which an identical modes appears at the same index modulo \( n \), allowing a suffix of \( \rho_1 \) (beginning from the first run's step following the matching mode) to be replayed on \( \rho_2 \).}

\end{figure}
\subsection{Deterministic vs. Explorable PDA}
Theorem 5.1 in the work of Guha et al.~\cite{Guha2021-sv} establishes that HD-PDAs can be exponentially more succinct than DPDAs. Since HD-PDAs correspond precisely to \(1\)-explorable PDAs, it follows that the same exponential succinctness gap holds between DPDAs and explorable PDAs. Hartmanis~\cite{Hartmanis1980-wi} proved that the relative succinctness of representing deterministic context-free languages (DCFLs) using deterministic versus nondeterministic pushdown automata (PDAs) is not bounded by any recursive function. In this work, we adapt and extend his proof technique to establish a similar result for deterministic and explorable PDAs—showing that the succinctness gap between them is likewise non-recursive.

Let \( M \) be a Turing machine, for an input \( x \), let \( \text{ID}_0(x) \) denote the initial configuration of \( M \) on input \( x \), and let \( \text{ID}_1(x), \text{ID}_2(x), \ldots \) denote the successive configurations during the computation.

If \( x = a_1 a_2 \ldots a_n \), we define the reversal of \( x \) as: $x^r = a_n a_{n-1} \ldots a_1.$

We now define $\text{VALC}[M]$ as the set of valid computations of $M$, represented with alternating reversals of configurations, as follows:
\[
\left\{ \text{ID}_0(x)\# [\text{ID}_1(x)]^r \# \text{ID}_2(x) \# \cdots \# \text{ID}_{2k}(x) \;\middle|\; \text{ID}_{2k}(x) \text{ is a halting configuration of } M \right\}.
\]

Each valid computation alternates between standard and reversed configurations, ending in a halting configuration. We also assume that $2k>3$, that is TM takes at least 3 steps before halting. 

We also define the set of invalid computations as:
$\text{INVALC}[M] = \Gamma^* \setminus \text{VALC}[M],$

\subsection*{INVALC is Explorable}

More specifically, in this section we show that INVALC is 2-explorable. 

Given a Turing machine $M = (Q, \Sigma, \Gamma, \delta, q_0, q_{\text{accept}}, q_{\text{reject}})$ and an input word $w \in \Sigma^*$, a string $\alpha$ belongs to $\mathrm{VALC}(M)$ if and only if the following conditions hold:

\begin{enumerate}
    \item $\alpha$ is of the form $\alpha = \#C_0\#C_1\#C_2\#\cdots\#C_N\#$
    where each $C_i$ is a configuration encoded as a string over $(\Gamma \cup Q)^*$. 

    \item Each configuration $C_i$ contains exactly one symbol from $Q$, indicating that the machine is in exactly one state at each computation step.

    \item $C_0$ encodes the initial configuration of $M$ on input $w$: the tape contains $w$ followed by blanks, the head is at the leftmost position, and the machine is in state $q_0$.

    \item $C_N$ is a halting configuration—i.e., the state in $C_N$ is either $q_{\text{accept}}$ or $q_{\text{reject}}$.

    \item For every $i$ such that $0 \leq i < N$, the configuration $C_{i+1}$ is a valid successor of $C_i$ under $M$’s transition function $\delta$. And alternate between standard and reversed representations (i.e., $C_i$ is reversed if $i$ is odd).
\end{enumerate}

We observe that among the five conditions defining membership in $\mathsf{VALC}[M]$, the first four are \emph{regular}. Therefore, when checking for $\mathsf{INVALC}[M]$ only checking (5)—ensuring correct transitions between adjacent configurations— does not hold requires deeper computational power. 
\paragraph*{Verifying Condition 5 with 2-Explorability}

We claim that a 2-explorable PDA suffices to verify that  condition (5) does not hold. To do so, we define a $2$-explorable machine consisting of a nondeterministic choice between two sub-automata, each of which checks the validity of half the pairs of configurations:
\begin{itemize}
    \item One sub-automaton processes the sequence in blocks of the form:
    \[
    \#\underline{C_0\#C_1}\#\underline{C_2\#C_3}\#\cdots\#C_N\#
    \]
    verifying that each reversed configuration $C_{2i+1}$ is a valid successor of the standard configuration $C_{2i}$.

    \item The other sub-automaton processes the alternating pattern:
    \[
    \#C_0\#\overline{C_1\#C_2}\#\overline{C_3\#C_4}\#\cdots\#C_N\#
    \]
    verifying that each standard configuration $C_{2i+2}$ is a valid successor of the reversed configuration $C_{2i+1}$.
\end{itemize}

Each sub-automaton independently verifies consistency across transitions. Note that checking whether a configuration $C'$ follows validly from $C$ (according to $M$'s transition function) can be done by a DPDA as follows:
\begin{itemize}
    \item Push the entire configuration $C$ onto the stack while reading.
    \item Then, read the next configuration $C'$ and pop from the stack, comparing symbols and using the transition function $\delta$ to ensure that the head movement, state change, and tape content evolve correctly.
    \item Any mismatch leads to acceptance 
\end{itemize}
Thus, the only nondeterminism is in the initial choice between the two sub-automata, after which each is deterministic. This means that only two runs suffice for Determiner in the $2$-exploration game, and hence the machine 2-explorable.

Since the first four conditions are regular  and the fifth is 2-explorable, we conclude that the set of invalid computations $\mathsf{INVALC}[M]$ is recognizable by a 2-explorable PDA. 

\begin{restatable}{lemma}{DPDAfinite}\label{lem:D-fin}
 \cite[Lemma 1]{Hartmanis1980-wi}
\(\text{INVALC}[M]\) is a deterministic context-free language (DCFL) if and only if \(L(M)\) is finite. 
\end{restatable}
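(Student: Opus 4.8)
The plan is to reduce the statement to an analysis of the complement language $\mathsf{VALC}[M]$, exploiting the classical fact that the deterministic context-free languages are closed under complementation. Hence $\mathsf{INVALC}[M]$ is a DCFL if and only if $\mathsf{VALC}[M]$ is a DCFL, and since $\mathrm{DCFL}\subseteq\mathrm{CFL}$ it suffices to control the membership of $\mathsf{VALC}[M]$ in these classes. Throughout I use that $M$ is deterministic, so that each configuration has a unique successor and each input determines a single computation; in particular there is a bijection between $L(M)$ and $\mathsf{VALC}[M]$, each accepted input contributing exactly one valid computation string.

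For the direction ``$L(M)$ finite $\Rightarrow$ $\mathsf{INVALC}[M]$ a DCFL'', I argue that if $L(M)$ is finite then, by the bijection above, $\mathsf{VALC}[M]$ is a finite set of finite words, hence regular. Its complement $\mathsf{INVALC}[M]$ is then regular as well, and in particular a DCFL. This direction is routine.

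For the converse I argue the contrapositive: if $L(M)$ is infinite then $\mathsf{VALC}[M]$ is not even context-free, hence not a DCFL, and by complementation closure $\mathsf{INVALC}[M]$ is not a DCFL either. To show $\mathsf{VALC}[M]\notin\mathrm{CFL}$ I use a pumping argument. Since $L(M)$ is infinite there are accepted inputs $x$ of arbitrary length, and under the standard encoding the corresponding valid computation $w=\mathrm{ID}_0(x)\#[\mathrm{ID}_1(x)]^r\#\cdots\in\mathsf{VALC}[M]$ has configuration blocks of length at least $|x|$ and a number of blocks that grows with the computation. Fixing $|x|$ larger than the pumping constant $p$ and pumping $w$, the pumped factors affect only a bounded portion of the computation, leaving all but a bounded, contiguous set of configuration blocks unchanged while inserting at least one symbol.

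The crux, and the step I expect to be the main obstacle, is turning this local modification into a contradiction; here determinism is the essential tool. If the inserted symbols change the number of separators $\#$ or the count of state symbols in a block, the pumped word fails one of the regular structural conditions (1)--(4) and leaves $\mathsf{VALC}[M]$. Otherwise the modification is confined to the interior of one or two consecutive blocks; letting $C_m$ be a changed block (so $C_m'\neq C_m$) with an untouched neighbour $C_{m-1}$, determinism forces the unique successor of $C_{m-1}$ to be the original $C_m$, so the transition $C_{m-1}\to C_m'$ is invalid and condition (5) fails. The delicate points are ensuring the pumped region lands in the interior of a configuration whose length exceeds $p$ (so that an untouched neighbour exists) and dispatching the boundary blocks $C_0$ and $C_N$, which are pinned down separately by the regular conditions fixing $C_0$ as the initial configuration of the encoded input and $C_N$ as a halting configuration; this can be made rigorous by applying Ogden's lemma to a marked interior configuration together with a case analysis on whether the pump alters separators, state symbols, or tape contents, and is precisely the content of Hartmanis's Lemma~1~\cite{Hartmanis1980-wi}. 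In every case the pumped word leaves $\mathsf{VALC}[M]$, contradicting the pumping lemma and establishing $\mathsf{VALC}[M]\notin\mathrm{CFL}$, which completes the argument.
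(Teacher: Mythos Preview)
Your proposal is correct and follows essentially the same route as the paper: both directions hinge on DCFL closure under complement, with the forward direction observing that finite $L(M)$ makes $\mathsf{VALC}[M]$ finite (hence $\mathsf{INVALC}[M]$ regular), and the converse arguing that infinite $L(M)$ makes $\mathsf{VALC}[M]$ non-context-free. The paper dispatches the non-CFL step in a single sentence (invoking the classical fact that checking more than two consecutive TM configurations is not context-free, together with the standing assumption that computations have at least three steps), whereas you spell out the pumping/Ogden argument in more detail; but the logical skeleton is identical.
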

\begin{proof}
If \(L(M)\) is finite, then the set of valid accepting computations \(\text{VALC}[M]\) is finite and so its complement \(\text{INVALC}[M]\) is co-finite. Since the class of regular languages is closed under complementation, and all finite or co-finite languages are regular, it follows that \(\text{INVALC}[M]\) is regular and hence a DCFL.

Conversely, if \(L(M)\) is infinite, then \(\text{VALC}[M]\) is not context-free (since checking the validity of more than two consecutive Turing Machine configurations is not context-free and our machines take at least three steps) and since DCFLs are closed under complement, its complement \(\text{INVALC}[M]\) cannot be a DCFL. 
\end{proof}

\begin{definition}
We say that the relative succinctness of representing deterministic context-free languages (DCFLs) by $2$-explorable and deterministic pushdown automata is \emph{not recursively bounded} if there is no recursive function \(F\) such that for every $2$-explorable-PDA \(A\) accepting a DCFL, there exists an equivalent deterministic PDA \(D\) such that $|D| \leq F(|A|),$ where \(|A|\) and \(|D|\) denote the descriptional sizes of \(A\) and \(D\), respectively.
\end{definition}

\begin{theorem}
\label{theorem:succinctness}
The relative succinctness of representing deterministic context-free languages by explorable and deterministic PDAs is not recursively bounded.
\end{theorem}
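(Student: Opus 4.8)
The plan is to argue by contradiction, converting a recursive succinctness bound into a semi-decision procedure for a problem that is known not to be recursively enumerable. The engine is \cref{lem:D-fin}, which ties the descriptional character of $\mathrm{INVALC}[M]$ to the finiteness of $L(M)$, together with the construction in the preceding discussion: from any Turing machine $M$ one can \emph{effectively} build a $2$-explorable PDA $A_M$ with $L(A_M)=\mathrm{INVALC}[M]$. I would fix as the target the classical fact that $\mathrm{INF}=\{M \mid L(M)\text{ is infinite}\}$ is not recursively enumerable (it is $\Pi_2$-complete). By \cref{lem:D-fin}, $\mathrm{INVALC}[M]$ fails to be a DCFL exactly when $L(M)$ is infinite, so $M\in\mathrm{INF}$ iff $L(A_M)$ is not a DCFL. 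Hence it suffices to show that, \emph{under the assumption that a recursive bound $F$ exists}, the predicate ``$L(A_M)$ is not a DCFL'' is semi-decidable uniformly in $M$; this would make $\mathrm{INF}$ r.e., the desired contradiction.

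Next I would use $F$ to collapse the quantification over DPDAs. A language is a DCFL iff \emph{some} DPDA recognises it, and by the supposed bound, if $L(A_M)$ is a DCFL then some DPDA $D$ with $|D|\le F(|A_M|)$ recognises it. Consequently,
\[
L(A_M)\text{ is not a DCFL}\iff \text{for every DPDA }D\text{ with }|D|\le F(|A_M|),\ L(D)\ne L(A_M).
\]
Since $|A_M|$ is computable and $F$ is recursive, the right-hand side is a \emph{finite} conjunction ranging over an effectively enumerable set of candidate DPDAs (there are only finitely many up to encoding of each bounded size).

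The key step is then to observe that each conjunct $L(D)\ne L(A_M)$ is semi-decidable. Indeed, membership is decidable both for the DPDA $D$ and for the PDA $A_M$ (ordinary PDA acceptance is decidable), so one can enumerate all words $w\in\Sigma^*$ and halt-and-accept as soon as some $w$ lies in the symmetric difference of $L(D)$ and $L(A_M)$; this procedure halts precisely when the two languages differ. A finite conjunction of r.e.\ predicates is r.e., and the map $M\mapsto A_M$ is computable, so ``$L(A_M)$ is not a DCFL'' — equivalently membership in $\mathrm{INF}$ — would be r.e., contradicting that $\mathrm{INF}$ is not. Finally I would note that every witness $A_M$ is $2$-explorable, so the statement holds already for $2$-explorable versus deterministic PDAs, which is stronger than, and implies, the claim for general explorable PDAs.

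The main obstacle is precisely that one cannot test the equivalence $L(D)=L(A_M)$ directly: equivalence of a DCFL with a general CFL is undecidable, since CFL universality already is. The whole argument is arranged to sidestep equivalence testing — the recursive bound converts an unbounded search over DPDAs into a finite one, and only \emph{inequality} of languages is ever tested, which is semi-decidable thanks to decidable membership. Getting this asymmetry right, namely semi-deciding inequality rather than equality and bounding the DPDA search by $F$, is the crux; everything else is the effective construction of $A_M$ and the invocation of \cref{lem:D-fin}.
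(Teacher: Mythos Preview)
Your proposal is correct and follows essentially the same route as the paper: assume a recursive bound $F$, use the effective $2$-explorable PDA $A_M$ for $\mathrm{INVALC}[M]$, bound the search over candidate DPDAs by $F(|A_M|)$, semi-decide each inequality $L(D)\neq L(A_M)$ by enumerating words, and conclude via \cref{lem:D-fin} that $\{M\mid L(M)\text{ infinite}\}$ would be r.e. Your write-up is in fact a bit more explicit than the paper's about why the finite conjunction of inequality tests is r.e.\ and about the role of decidable membership, and you correctly observe that the witnesses are $2$-explorable so the result already holds at that level.
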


\begin{proof}
Suppose, for contradiction, that there exists a recursive function \(F\) such that for every $2$-explorable-PDA \(A\) accepting a DCFL, there exists a deterministic PDA \(D\) equivalent to \(A\) with \(|D| \leq F(|A|)\).

Then given a TM \(M\), let \(A\) be the \(2\)-explorable PDA that recognizes \(\text{INVALC}[M]\).  We can compute \(F(|A|)\) and enumerate all deterministic PDAs \(D_1, D_2, \ldots, D_s\) of sizes up to \(F(|A|)\). Then $L(A)$ is not deterministic iff none of these \(D_j\) is equivalent to \(A\), which is semi-decidable, by comparing memberships on all strings in \(\Sigma^*\). If none of them is equivalent to \(A\), we conclude that \(L(A)\) is not a DCFL. Therefore, by ~\cref{lem:D-fin}  we can enumerate the set $\{ M \mid L(M) \text{ is infinite } \}$ which would contradict Rice’s theorem \cite{Rice1953-lk}. Hence, \(R\) is not recursively enumerable. Therefore, such a function \(F\) cannot exist. 
\end{proof} 


\section{Conclusion}
In this work, we introduced and studied the notion of \emph{explorability} for PDAs, refining the landscape between deterministic and nondeterministic computation. Building on the concept of history-determinism, we defined the class of \(k\)-explorable PDAs and established a strict expressiveness hierarchy based on the number of concurrent explorations permitted. 

On the succinctness front, we established that explorable PDAs can be doubly exponentially more succinct than PDAs with fixed constant bound on explorability. Furthermore, the succinctness gap between DPDAs and 2-explorable PDAs is not recursively enumerable, extending the classic non-RE gap between deterministic and nondeterministic PDAs.

Despite these advances, several intriguing questions remain open. The introduction of parameterized explorability gives rise to a rich landscape of intermediate classes---such as \(\log n\)-explorable, \(poly\)-explorable or \(\sqrt{n}\)-explorable PDAs---whose expressiveness and succinctness properties are yet to be systematically studied. Furthermore, while we have established tight bounds in several cases, the exact succinctness gap between general explorable PDAs and fully nondeterministic PDAs remains unresolved. Addressing these questions may not only deepen our understanding of the complexity-theoretic trade-offs between determinism and nondeterminism in the context of PDAs, but could also shed new light on longstanding questions regarding the structure of context-free languages. While we do not have any hope that explorability of PDA is decidable because HD is undecidable, it seems likely that in the context of visibly pushdown automata, explorability might have some better algorithmic properties, like recognizability, lower complexity universality, Gale-Stewart games and inclusion.



\bibliography{Explorability}

\begin{thebibliography}{10}

\bibitem{Bednarova2014-ee}
Zuzana Bedn{\'{a}}rov{\'{a}}, Viliam Geffert, Carlo Mereghetti, and Beatrice
  Palano.
\newblock Removing nondeterminism in constant height pushdown automata.
\newblock {\em Descriptional Complexity of Formal Systems - 14th International
  Workshop, {DCFS} 2012, Braga, Portugal, July 23-25, 2012. Proceedings},
  7386:76--88, 2012.
\newblock \href {https://doi.org/10.1007/978-3-642-31623-4\_6}
  {\path{doi:10.1007/978-3-642-31623-4\_6}}.

\bibitem{Goldstine2005-gr}
Jonathan Goldstine, Hing Leung, and Detlef Wotschke.
\newblock Measuring nondeterminism in pushdown automata.
\newblock {\em {STACS} 97, 14th Annual Symposium on Theoretical Aspects of
  Computer Science, L{\"{u}}beck, Germany, February 27 - March 1, 1997,
  Proceedings}, 1200:295--306, 1997.
\newblock URL: \url{https://doi.org/10.1007/BFb0023467}, \href
  {https://doi.org/10.1007/BFB0023467} {\path{doi:10.1007/BFB0023467}}.

\bibitem{Guha2021-sv}
Shibashis Guha, Isma{\"{e}}l Jecker, Karoliina Lehtinen, and Martin Zimmermann.
\newblock A bit of nondeterminism makes pushdown automata expressive and
  succinct.
\newblock {\em Log. Methods Comput. Sci.}, 20(1), 2024.
\newblock URL: \url{https://doi.org/10.46298/lmcs-20(1:3)2024}, \href
  {https://doi.org/10.46298/LMCS-20(1:3)2024}
  {\path{doi:10.46298/LMCS-20(1:3)2024}}.

\bibitem{Han2019-mc}
Yo{-}Sub Han, Sang{-}Ki Ko, and Kai Salomaa.
\newblock Limited nondeterminism of input-driven pushdown automata:
  Decidability and complexity.
\newblock In Michal Hospod{\'{a}}r, Galina Jir{\'{a}}skov{\'{a}}, and Stavros
  Konstantinidis, editors, {\em Descriptional Complexity of Formal Systems -
  21st {IFIP} {WG} 1.02 International Conference, {DCFS} 2019, Ko{\v{s}}ice,
  Slovakia, July 17-19, 2019, Proceedings}, volume 11612 of {\em Lecture Notes
  in Computer Science}, pages 158--170. Springer, 2019.
\newblock \href {https://doi.org/10.1007/978-3-030-23247-4\_12}
  {\path{doi:10.1007/978-3-030-23247-4\_12}}.

\bibitem{harrison_ogden}
Michael.~A. Harrison.
\newblock {\em Introduction to Formal Language Theory}.
\newblock Addison-Wesley Longman Publishing Co., Inc., USA, 1st edition, 1978.

\bibitem{Hartmanis1980-wi}
Juris Hartmanis.
\newblock On the succinctness of different representations of languages.
\newblock {\em {SIAM} J. Comput.}, 9(1):114--120, 1980.
\newblock \href {https://doi.org/10.1137/0209010} {\path{doi:10.1137/0209010}}.

\bibitem{Hazard2024-sg}
Emile Hazard, Olivier Idir, and Denis Kuperberg.
\newblock Explorable parity automata.
\newblock {\em CoRR}, abs/2410.23187, 2024.
\newblock URL: \url{https://doi.org/10.48550/arXiv.2410.23187}, \href
  {https://arxiv.org/abs/2410.23187} {\path{arXiv:2410.23187}}, \href
  {https://doi.org/10.48550/ARXIV.2410.23187}
  {\path{doi:10.48550/ARXIV.2410.23187}}.

\bibitem{E-Hazard}
Emile Hazard and Denis Kuperberg.
\newblock Explorable automata.
\newblock {\em 31st {EACSL} Annual Conference on Computer Science Logic, {CSL}
  2023, February 13-16, 2023, Warsaw, Poland}, 252:24:1--24:18, 2023.
\newblock URL: \url{https://doi.org/10.4230/LIPIcs.CSL.2023.24}, \href
  {https://doi.org/10.4230/LIPICS.CSL.2023.24}
  {\path{doi:10.4230/LIPICS.CSL.2023.24}}.

\bibitem{Henzinger2006-tv}
Thomas~A. Henzinger and Nir Piterman.
\newblock Solving games without determinization.
\newblock In Zolt{\'{a}}n {\'{E}}sik, editor, {\em Computer Science Logic, 20th
  International Workshop, {CSL} 2006, 15th Annual Conference of the EACSL,
  Szeged, Hungary, September 25-29, 2006, Proceedings}, volume 4207 of {\em
  Lecture Notes in Computer Science}, pages 395--410. Springer, 2006.
\newblock \href {https://doi.org/10.1007/11874683\_26}
  {\path{doi:10.1007/11874683\_26}}.

\bibitem{Herzog1997-fs}
Christian Herzog.
\newblock Pushdown automata with bounded nondeterminism and bounded ambiguity.
\newblock {\em Theor. Comput. Sci.}, 181(1):141--157, 1997.
\newblock \href {https://doi.org/10.1016/S0304-3975(96)00267-8}
  {\path{doi:10.1016/S0304-3975(96)00267-8}}.

\bibitem{Hopcroft2001}
John~E Hopcroft, Rajeev Motwani, and Jeffrey~D Ullman.
\newblock Introduction to automata theory, languages, and computation, 2nd
  edition.
\newblock {\em ACM SIGACT News}, 32(1):60--65, March 2001.

\bibitem{Rice1953-lk}
H.~G. Rice.
\newblock Classes of recursively enumerable sets and their decision problems.
\newblock {\em Trans. Amer. Math. Soc.}, 74:358--366, 1953.

\bibitem{Rozenberg2012-zi}
Grzegorz Rozenberg and Arto Salomaa, editors.
\newblock {\em Handbook of formal languages}.
\newblock Springer, Berlin, Germany, November 2012.

\end{thebibliography}

\newpage

\appendix
\section{Turing Machine}
Let M be a Turing machine. Then $M$ is a 7-tuple:$M = (Q, \Sigma, \Gamma, \delta, q_0, q_{\text{accept}}, q_{\text{reject}})$ where:
\begin{itemize}
    \item $Q$ is a finite set of states,
    \item $\Sigma$ is the input alphabet, not containing the blank symbol $\sqcup$,
    \item $\Gamma$ is the tape alphabet, where $\Sigma \subseteq \Gamma$ and $\sqcup \in \Gamma$,
    \item $\delta: Q \times \Gamma \to Q \times \Gamma \times \{L, R\}$ is the transition function,
    \item $q_0 \in Q$ is the start state,
    \item $q_{\text{accept}} \in Q$ is the accept state,
    \item $q_{\text{reject}} \in Q$ is the reject state, with $q_{\text{reject}} \ne q_{\text{accept}}$.
\end{itemize}

\section{Pumping Proofs}

\subsection*{Ogden's Lemma}

Ogden's Lemma is a generalization of the \textit{pumping lemma} for context-free languages. It is particularly useful in proving that certain languages are not context-free.

\begin{lemma}[Ogden's Lemma \cite{harrison_ogden}]
Let $L$ be a context-free language. Then there exists a constant $k \geq 1$ such that for every string $w \in L$ with at least $k$ \emph{marked positions}, there exists a decomposition $w = u v x y z$ such that:
\begin{enumerate}
    \item $v x y$ contains at most $k$ marked positions,
    \item $v$ and $y$ together contain at least one marked position, and 
    \item for all $i \geq 0$, the string $u v^i x y^i z \in L$.
\end{enumerate}
\end{lemma}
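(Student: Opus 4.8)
The plan is to prove Ogden's Lemma by a parse-tree argument that generalizes the classical proof of the context-free pumping lemma, the only difference being that it tracks a distinguished set of \emph{marked} leaf positions rather than all positions. First I would fix a context-free grammar $G = (V, \Sigma, P, S)$ generating $L$ and convert it to Chomsky Normal Form, so that $G$ generates $L \setminus \{\varepsilon\}$ and every internal node of a parse tree either has exactly two children (from a rule $A \to BC$) or is the parent of a single terminal leaf (from a rule $A \to a$). Since any $w$ with at least $k \ge 1$ marked positions is nonempty, it admits such a binary parse tree $T$, and I would carry out the whole argument on $T$.

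I would set the constant to $k = 2^{|V|+1}$ and single out a root-to-leaf path in $T$ as follows: starting at the root, at each internal node descend into the child whose subtree contains the larger number of marked leaves, breaking ties arbitrarily. Call an internal node on this path a \emph{branch node} if both of its children have at least one marked leaf descendant. The key counting observation is that the marked-leaf count $m(N)$ of the subtree rooted at a path node $N$ drops by a factor of at most $2$ when the path crosses a branch node, and stays constant otherwise: at a non-branch node the off-path child carries no marked leaves, while at a branch node the chosen (heavier) child carries at least half of them. Since the path starts with $m(\text{root}) \ge k = 2^{|V|+1}$ and ends at a leaf with $m \le 1$, the path must cross at least $|V|+1$ branch nodes.

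Next I would apply the pigeonhole principle to the \emph{lowest} $|V|+1$ branch nodes on the path: as $G$ has only $|V|$ variables, two of them, say $N_1$ (higher) and $N_2$ (lower, a descendant of $N_1$), carry the same variable $A$. Writing the yield of $N_1$ as $v\,x\,y$ with $x$ the yield of $N_2$, and letting $u, z$ be the portions of $w$ to the left and right of $N_1$'s yield, I obtain $w = u v x y z$ together with derivations $A \Rightarrow^* v A y$ (from $N_1$ down to $N_2$) and $A \Rightarrow^* x$ (at $N_2$). Condition (3) then follows by iterating the first derivation, $A \Rightarrow^* v^i A y^i \Rightarrow^* v^i x y^i$, giving $u v^i x y^i z \in L$ for every $i \ge 0$. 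For condition (2), I would use that $N_1$ is a branch node: its off-path child contains a marked leaf, and since that leaf lies outside the yield of $N_2$, it lies inside $v$ or inside $y$, so $v$ and $y$ together contain at least one marked position.

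The remaining and most delicate point is condition (1), that $v x y$ contains at most $k$ marked positions. Here I would reapply the factor-of-two counting to the subtree rooted at $N_1$: because $N_1$ was chosen among the lowest $|V|+1$ branch nodes, the path from $N_1$ to the terminating leaf crosses at most $|V|+1$ branch nodes, so $m(N_1) \le 2^{|V|+1}\cdot m(\text{leaf}) \le 2^{|V|+1} = k$. I expect the main obstacle to be bookkeeping rather than conceptual: one must verify that the single path-selection rule (always descend into the heavier marked subtree) simultaneously forces the global lower bound of ``at least $|V|+1$ branch nodes'' and the local upper bound of ``at most $k$ marked leaves beneath $N_1$,'' and that the off-path marked leaf invoked for condition (2) genuinely sits in $v$ or $y$ rather than being swallowed into $x$. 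Keeping these three uses of the branch-node structure mutually consistent is where the argument needs the most care.
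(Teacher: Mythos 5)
The paper does not prove this lemma at all: Ogden's Lemma is stated as a classical result and cited to Harrison's treatment of Ogden's theorem, so there is no in-paper argument to compare against. Your proposal is the standard textbook parse-tree proof (CNF grammar, $k = 2^{|V|+1}$, descend into the child with more marked leaves, branch nodes where both children carry marked leaves, pigeonhole on the lowest $|V|+1$ branch nodes), and it is correct: the halving argument gives at least $|V|+1$ branch nodes, the repeated variable $A$ yields $A \Rightarrow^* vAy$ and $A \Rightarrow^* x$ for condition (3), the off-path marked leaf of $N_1$ lies outside the yield of $N_2$ and hence in $v$ or $y$ for condition (2), and choosing $N_1$ among the lowest $|V|+1$ branch nodes bounds the marked leaves under $N_1$ by $2^{|V|+1} = k$ for condition (1). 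One small tightening you may as well note: your descent rule (always move to a child containing at least $\lceil m/2 \rceil \geq 1$ marked leaves whenever the current node has any) forces the path to terminate at a \emph{marked} leaf, so $m(\text{leaf}) = 1$ exactly, which is what makes the upper bound $m(N_1) \leq 2^{|V|+1} \cdot m(\text{leaf})$ in your final step nondegenerate; with that observation the three uses of the branch-node structure you flag as delicate are indeed mutually consistent and no gap remains.
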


\LSisnotCFG*

\begin{proof}
Assume, towards a contradiction, that \( L'_S \) is context-free. Then, by Ogden’s Lemma as defined in ~\cref{sec:Preliminaries} , let us choose \( w = a^n b^{in} c^{jn} \), where \( n > p \), $i,j>0$, and \( (i, j) \in S_n \) are fixed. We mark all \( n \) positions in the \( a \)-block (i.e., all occurrences of \( a \)).

We now consider the possible forms of the substrings \( v \) and \( y \). Since only the \( a \)-block contains marked positions, at least one of \( v \) or \( y \) must be a substring of the \( a \)-block. Also, in order to preserve the structure of strings in \( L_S \), the decomposition \( uvxyz \) must ensure that pumping does not mix symbols from different blocks---otherwise, \( uv^i x y^i z \) would no longer be of the form \( a^* b^* c^* \), violating the block structure required by \( L_S \).

Thus, \( v \) must be entirely within the \( a \)-block, and \( y \) must lie within a single block---either \( a \), \( b \), or \( c \). In any case, one of the \( b \)- or \( c \)-blocks will be left untouched by the pumping.

Without loss of generality, suppose \( y \) lies in the \( b \)-block. Then pumping increases the number of \( a \)'s (via \( v \)) and the number of \( b \)'s (via \( y \)), but leaves the number of \( c \)'s unchanged. Let us pump \( k = jn + 1 \) times. Since we pumped a non-empty portion of the \( a \)-block (required by the Ogdens lemma), the number of \( a \)'s becomes greater than \( jn+1 \), while the number of \( c \)'s remains exactly \( jn \). This means the resulting string cannot belong to any \( L_{i',j'} \) with \( i', j' \in \mathbb{N} \), because the ratio of \( c \)'s to \( a \)'s is no longer an integer.

Thus, \( uv^{jn+1} x y^{jn+1} z \notin L_S \), contradicting Ogden’s Lemma. Therefore, \( L_S \) is not context-free.
\end{proof}

\LblockisnotCFL*

\begin{proof}
Assume for contradiction that \(L_{\text{block-}S'}\) is context-free. We apply \emph{Ogden’s Lemma}.

Let \(p\) be the constant from Ogden’s Lemma. Define the string:
\[
s = a^p \# a^{p+1} \# a^{p+2} \# \cdots \# a^{p+k}.
\]
That is, \(s\) consists of \(k+1\) distinct \(a\)-blocks, each separated by \(\#\), and each longer than the previous one. Choose two block lengths \(i, j\) occurring in \(s\) such that \(j > i\), so that \((i, j) \in S'_s\).

Now consider the string:
\[
w = s \# b^i c^{j - i} \in L_{\text{block-}S'}.
\]
We mark all positions corresponding to the \(b\)'s in \(w\). There are \(i \geq p\) such positions, so Ogden’s Lemma applies: there exists a decomposition $w = u v x y z$ such that:
\begin{enumerate}
    \item $v x y$ contains at most $k$ marked positions,
    \item $v$ and $y$ together contain at least one marked position, and 
    \item for all $i \geq 0$, the string $u v^m x y^m z \in L_{block-S'}$.
\end{enumerate}
Ovserve that if any of \( v,y\) cannot contain seperators (\#) or 2 distinct letters as it will break the regular strucute of the language.We now consider possible pumping effects: 

\textbf{Case 1: Pumping occurs entirely within the \(b^i c^{j - i}\) suffix.}

If we pump only \(b\)'s or both \(b\)'s and \(c\)'s, we can increase the number of \(b\)'s and/or \(c\)'s. However, the prefix \(s\) remains unchanged. Since the set \(S'_s\) depends only on \(s\), and these are fixed, there is a maximum length \(j_{\max}\) of any \(a\)-block in \(s\). If \(m>p+k\), the pumped string has more \(b\)'s than any \(a\)-block in \(s\), meaning there is no pair \((i', j') \in S'_s\) satisfying the new \(b\)/\(c\) lengths, contradicting \(w_n \in L_{\text{block-}S'}\).

\textbf{Case 2: Pumping pair \(v,y\) lies in an \(a\)-block and the \(b\)-block.}

Suppose the pumped region includes some \(a\)-block in the prefix \(s\) and part of the \(b\)-block. Then, pumping increases both the length of that \(a\)-block and the number of \(b\)'s.

We can take \(m>p+k\) so that the number of \(b\)'s becomes larger than all \(a\)-blocks in \(s\) except possibly the one being pumped. Observe that only one $a$-block is large, so there is no $a$ block to match the sum of the b and c blocks. This means that the new \(b\)-block length must matches the (pumped) \(a\)-block's length. In such a situation, the new pair \((i', j')\) extracted from the pumped string, such that \(i'\) is the length of a \(b\)-block, but as \(j'>i'\) but there is no \(a\)-block whose length is more than \(i'\), the pumped string does not belong to \(L_{\text{block-}S'}\). This contradicts Ogden’s Lemma.

In all cases, pumping leads to a string not in the language. Thus, our assumption that \(L_{\text{block-}S'}\) is context-free must be false.
\end{proof}

\end{document}